\newcommand{\cmp}{Comm. Math. Phys.~}
\newcommand{\jpa}{J. Phys. A~}
\newcommand{\prl}{Phys. Rev. Lett.~}
\newcommand{\pra}{Phys. Rev. A~}
\definecolor{myurlcolor}{rgb}{0,0,0.7}
\newcommand{\tinyspace}{\mspace{1mu}}
\newcommand{\op}[1]{\operatorname{#1}}
\newcommand{\abs}[1]{\left\lvert\tinyspace #1 \tinyspace\right\rvert}
\newcommand{\norm}[1]{\left\lVert\tinyspace #1 \tinyspace\right\rVert}
\renewcommand{\det}{\operatorname{det}}
\renewcommand{\t}{{\scriptscriptstyle\mathsf{T}}}
\newcommand{\setft}[1]{\mathrm{#1}}
\newcommand{\density}[1]{\setft{D}\left(#1\right)}
\renewcommand{\vec}{\op{vec}}
\def\u{\mathfrak{u}}
\def\k{\mathfrak{k}}
\def\liet{\mathfrak{t}}
\def\Ad{\mathrm{Ad}}
\def\vol{\mathrm{vol}}
\def \dif {\mathrm{d}}
\def \diag {\mathrm{diag}}
\def \vol {\mathrm{vol}}
\def\complex{\mathbb{C}}
\def\real{\mathbb{R}}
\def\I{\mathbb{1}}
\newenvironment{mylist}[1]{\begin{list}{}{
    \setlength{\leftmargin}{#1}
    \setlength{\rightmargin}{0mm}
    \setlength{\labelsep}{2mm}
    \setlength{\labelwidth}{8mm}
    \setlength{\itemsep}{0mm}}}
    {\end{list}}
\def\ot{\otimes}
\newcommand{\iinner}[2]{\langle #1 | #2\rangle}
\newcommand{\out}[2]{| #1\rangle\langle #2 |}
\newcommand{\Inner}[2]{\left\langle #1 , #2\right\rangle}
\newcommand{\Innerm}[3]{\left\langle #1 \left| #2 \right| #3 \right\rangle}
\newcommand{\pa}[1]{(#1)}
\newcommand{\Pa}[1]{\left(#1\right)}
\newcommand{\Br}[1]{\left[#1\right]}
\newcommand{\set}[1]{\{#1\}}
\newcommand{\Set}[1]{\left\{#1\right\}}
\newcommand{\ket}[1]{|#1\rangle}
\DeclareMathOperator{\trace}{Tr}
\newcommand{\ptr}[2]{\trace_{#1}\pa{#2}}
\newcommand{\Ptr}[2]{\trace_{#1}\Pa{#2}}
\newcommand{\Tr}[1]{\Ptr{}{#1}}
\def\cC{\mathcal{C}}
\def\cH{\mathcal{H}}
\def\cK{\mathcal{K}}\def\cO{\mathcal{O}}
\def\bP{\mathbf{P}}
\def\bsM{\boldsymbol{M}}
\def\bsr{\boldsymbol{r}}\def\bss{\boldsymbol{s}}
\def\rD{\mathrm{D}}
\def\rG{\mathrm{G}}\def\rH{\mathrm{H}}
\def\rL{\mathrm{L}}\def\rO{\mathrm{O}}
\def\rS{\mathrm{S}}
\def\rU{\mathrm{U}}
\def\sD{\mathscr{D}}
\def\sL{\mathscr{L}}
\newtheorem{thrm}{Theorem}[section]
\newtheorem{lem}[thrm]{Lemma}
\newtheorem{prop}[thrm]{Proposition}
\newtheorem{cor}[thrm]{Corollary}
\theoremstyle{definition}
\newtheorem{remark}[thrm]{Remark}
\newtheorem{exam}[thrm]{Example}
\numberwithin{equation}{section}
\newcounter{questionnumber}
\newenvironment{question}
       {\begin{mylist}{\parindent}
     \item[\stepcounter{questionnumber}\thequestionnumber.]}
       {\end{mylist}}
\begin{document}

\title{Volume of the set of locally diagonalizable bipartite states}

\author{Lin Zhang$^1$\footnote{E-mail: godyalin@163.com; linyz@hdu.edu.cn},\quad Seunghun
Hong$^2$\footnote{E-mail: seunghun.hong@gmail.com}\\
  {\it\small $^1$Institute of Mathematics, Hangzhou Dianzi University, Hangzhou 310018, PR~China}\\
  {\it\small $^2$Northwestern College, 101 7th St SW, Orange City, 51041 IA, USA}}
\date{}
\maketitle
\maketitle \mbox{}\hrule\mbox\\
\begin{abstract}

The purpose of this article is to investigate the geometry of the
set of locally diagonalizable bipartite quantum states. We have
the following new results: the Hilbert-Schmidt volume of all locally
diagonalizable states, and a necessary and sufficient condition for
local diagonalizability in the qubit-qubit case. Besides, we
partition the set of all locally diagonalizable states as local
unitary orbits (or coadjoint orbits) of diagonal forms. It is
well-known that the Riemannian volume of a coadjoint orbit for a
regular point in a specified Weyl chamber can be calculated
by Harish-Chandra's volume formula. By modifying Harish-Chandra's
volume formula, we give, for the first time, a specific
formula for the Riemannian volume of a local unitary
orbit of a regular point in a specified Weyl chamber.
Several open questions are presented as well.\\~\\
\textbf{Keywords:} Euclid volume; Hilbert-Schmidt measure;
Harish-Chandra's volume formula; local unitary orbit

\end{abstract}
\maketitle \mbox{}\hrule\mbox

\section{Introduction}

Qubits and qubit quantum channels are the simplest building blocks
for quantum information processing and quantum computations. A qubit
is the quantum analog of the classical bit; a qubit quantum channel
is just the quantum analog of the transition probability matrix.
Recently, Lovas and Andai \cite{Lovas2016a} analyzed the structure
of these qubit channels using the duality between quantum maps and
quantum states, i.e., via Choi-Jami{\l}kowski correspondence
\cite{Karol2004}. They calculate the (Euclid) volume of general and
unital qubit channels (real and complex case) with respect to the
Lebesgue measure. For unital qubit channels, they are essentially
equivalent to two-qubit states with the same completely-mixed
marginal states via Choi-Jami{\l}kowski representation.


In the recent decades, the geometric separability probability of
bipartite systems, i.e., the ratio of volumes of the set of
separable bipartite states to the set of all bipartite states on the
same tensor space of two Hilbert spaces, has been extensively
studied. In 1998, \.{Z}yczkowski \emph{et al} raised that question
and gave a detailed discussion about it \cite{Karol1998}. To solve
the problem, as suggested by the definition of separability
probability, we need to calculate two volumes: (1) the volume of the
set of all states acting on the same Hilbert space and (2) the
volume of the set of all separable states acting on the bipartite
Hilbert spaces. Luckily, the volume of the set of all states with
respect to the Hilbert-Schmidt measure was computed by
\.{Z}yczkowski and Sommers \cite{Karol2003} and Andai
\cite{Andai2006}. A review with detailed reasoning for such volume
formula can be found in Zhang \cite{Zhang2015}. Computing the
separability probability of two-qubit quantum states relative to the
Hilbert-Schmidt measure is one of the simplest yet challenging
problems. Numerical simulations lead to intriguing formulas for
separability probability, presented in 2013 by Slater
\cite{Slater2012}: the separability probability for \emph{real}
two-qubit state is
\begin{eqnarray*}
\frac{\vol_{\rH\rS}\Pa{\rD_{\mathrm{sep}}(\real^2\ot\real^2)}}{\vol_{\rH\rS}\Pa{\density{\real^2\ot\real^2}}}=\frac{29}{64}
\end{eqnarray*}
and, for \emph{complex} two-qubit it is
\begin{eqnarray*}
\frac{\vol_{\rH\rS}\Pa{\rD_{\mathrm{sep}}(\complex^2\ot\complex^2)}}{\vol_{\rH\rS}\Pa{\density{\complex^2\ot\complex^2}}}=\frac8{33},
\end{eqnarray*}
where $\density{\mathbb{K}^m\ot\mathbb{K}^n}$ stands for the set of
all bipartite density matrices acting on
$\mathbb{K}^m\ot\mathbb{K}^n$, and
$\rD_{\mathrm{sep}}\Pa{\mathbb{K}^m\ot\mathbb{K}^n}$ is the set of
separable matrices in $\rD\Pa{\mathbb{K}^m\ot\mathbb{K}^n}$. Here
$\mathbb{K}$ equals $\real$ or $\complex$. The \emph{real} case has
been proved by Lovas and Andai \cite{Lovas2016b}. But the
\emph{complex} case is still open at present. There are, however,
results of Aubrun, Szarek, and Ye for the \emph{asymptotic}
separability probability in the limit of the dimension of the state
system tending to infinity \cite{Aubrun2006,Aubrn2014}, which are
very interesting from a mathematical point of view, complementing
the conjectured exact formulas in the low dimensional setting.


The purpose of this article is to infer some information about the
set of bipartite quantum states that are locally diagonalizable. We
have the following new results: a necessary and sufficient condition
for local diagonalizability in the qubit-qubit case (Theorem
\ref{th:loc-diag-cond}) and the Hilbert-Schmidt volume of all
locally diagonalizable states (Theorem
\ref{th:vol-lu-for-two-qubit}). The celebrated Harish-Chandra's
volume formula allows us to calculate the volume of a coadjoint
orbit for a regular point in a specified Weyl chamber. By modifying
Harish-Chandra's formula, we give, for the first time, a specific
formula for the volume of a coadjoint local unitary orbit of a
regular point in a specified Weyl chamber. As an application, we
calculate the volume of the set of all bipartite quantum states that
are locally unitary equivalent to a diagonal quantum state.


Here is the outline of the paper: After introducing basic notions
that we use, we derive, in
Section~\ref{sect:vol-local-diagonal-form}, the joint probability
distribution density of all eigenvalues of all locally
diagonalizable bipartite states. We also present a necessary and
sufficient condition for a two-qubit to be locally unitary (LU)
equivalent to a diagonal form. We then proceed to calculate the
Hilbert-Schmidt (HS) volume of locally diagonalizable states
(Theorem~\ref{th:vol-lu-for-two-qubit}). In
Section~\ref{sect:Harish-Chandra}, we apply Harish-Chandra's volume
formula to specific cases such as the unitary group and the tensor
product of two unitary groups. Two main results of the Section are
Theorems~\ref{th:vol-tensor-group} and \ref{th:Vol-LUO}; the first
yields an analytical formula for the volume of the tensor product of
two unitary groups, and the second leads to the conclusion that the
volume of a locally unitary orbit in the tensor product case equals
the product of the volumes of the factors. We conclude in
Section~\ref{sect:discuz-&-conclusion} with discussion and several
open problems.

\section{Volume of the set of all locally diagonalizable
states}\label{sect:vol-local-diagonal-form}

Suppose we have two finite-dimensional Hilbert spaces $\cH$ and
$\cK$. Specifically, let $\cH=\complex^m$ and $\cK=\complex^n$.
Chose the standard basis $\set{\ket{i}}^m_{i=1}$ and
$\set{\ket{j}}^n_{j=1}$ for $\complex^m$ and $\complex^n$,
respectively. A qudit is represented by a positive semi-definite
matrix of unit trace and we shall identify the two. Denote by
$\density{\complex^k}$ the set of all $k\times k$ density matrices.
Then the set of all bipartite quantum states is
$\density{\complex^m\ot\complex^n}$. Throughout this paper, we do
not distinguish the meaning of a state and a density matrix. A
bipartite state $\rho_{AB}$ is \emph{separable} if it is a
probabilistic mixture of product states $\rho^A_\mu\ot\rho^B_\mu$
where $\rho^A_\mu\in\density{\complex^m}$ and
$\rho^B_\mu\in\density{\complex^n}$. In other words, $\rho_{AB}$ is
separable if
\begin{eqnarray}
\rho_{AB} = \sum_\mu p_\mu\rho^A_\mu\ot\rho^B_\mu,
\end{eqnarray}
where $\set{p_\mu}_\mu$ is a probability distribution. If a bipartite
state is not separable then it is said to be \emph{entangled.}
Thanks to the Spectral Decomposition Theorem, which says that the orbit of a
Hermitian matrix under the adjoint action of unitary matrices
contains a diagonal matrix,
we see that a bipartite state
$\rho_{AB}\in\density{\complex^m\ot\complex^n}$ can be diagonalized
by a global unitary matrix $U\in\rU(mn)$, where $\rU(mn)$ can be
understood as the unitary group
comprising all unitary matrices on $\complex^m\ot\complex^n$;
thus,
\begin{eqnarray}\label{eq:LU-form}
\rho_{AB} = U\Lambda U^\dagger,\quad
\Lambda=\sum^m_{i=1}\sum^n_{j=1}\lambda_{i,j}\out{i,j}{i,j},
\end{eqnarray}
where $\dagger$ denotes the adjoint. Note that all
eigenvalues $\lambda_{ij}$ of $\rho_{AB}$ are indexed by two
indices.


A bipartite state $\rho_{AB}$ is said to be \emph{locally
diagonalizable} if it is diagonalizable in the following manner:
\begin{eqnarray}\label{eq:LD}
\rho_{AB} = (U_A\ot U_B)\Lambda (U_A\ot U_B)^\dagger
\end{eqnarray}
for a simple tensor $U_A\ot U_B\in\rU(m)\ot\rU(n))$; we may, in fact, require that $U_A\ot U_B\in\rS\rU(m)\ot\rS\rU(n)$. We denote
the set of all locally diagonalizable bipartite states from
$\density{\complex^m\ot\complex^n}$ by the following notation:
\begin{eqnarray}
\sD_{\rL\rU}(\complex^m\ot\complex^n) :=
\Set{\rho\in\density{\complex^m\ot\complex^n}: \rho \text{ is
locally diagonalizable}}.
\end{eqnarray}
Locally diagonalizability
is intimately related to what is known as local unitary equivalence; two
bipartite states $\rho$ and $\rho'$  in
$\density{\complex^m\ot\complex^n}$ are said to be \emph{locally
unitary (LU) equivalent} if
\begin{eqnarray}
\rho' = (U\ot V)\rho (U\ot V)^\dagger
\end{eqnarray}
for some simple tensor $U\ot V\in\rS\rU(m)\ot\rS\rU(n)$.
So a bipartite state is locally diagonalizable if and only if
it is LU equivalent to a diagonal state.


Not every bipartite state can be locally
diagonalizable, and even so for separable states. This can be seen
from simple dimension counting \cite{Karol2006}. It is easily
seen that $\dim\Pa{\density{\complex^m\ot\complex^n}}=(mn)^2-1$.
For the submanifold
$\sD_{\rL\rU}(\complex^m\ot\complex^n)$ we have the following
identification:
\begin{eqnarray}
\sD_{\rL\rU}(\complex^m\ot\complex^n) &\simeq&
(\rU(m)\ot\rU(n))/(T_{(m)}\ot T_{(n)})\times \Delta_{mn-1}\\
&\simeq& (\rU(m)/T_{(m)})\ot(\rU(n)/ T_{(n)})\times \Delta_{mn-1},
\end{eqnarray}
where $\Delta_{k-1}:= \Set{(p_1,\ldots,p_k)\in\real^k_+:
\sum^k_{j=1}p_j=1}$ is the $(k-1)$-dimensional probability simplex, and
 $T_{(k)}$ denotes
the (standard) maximal tori of the compact Lie group $\rU(k)$
(more on this in Section~\ref{sect:Harish-Chandra}).
Therefore
\begin{eqnarray}\label{eq:localdim}
\dim\Pa{\sD_{\rL\rU}(\complex^m\ot\complex^n)}=(m^2-1)+(n^2-1)+(m-1)(n-1).
\end{eqnarray}
We note that the minimum of
$\dim\Pa{\sD_{\rL\rU}(\complex^m\ot\complex^n)}$ for  fixed
$d:=mn$ is achieved at $m=n=\sqrt{d}$, while the maximum is achieved
at $m=1$ or $n=1$. We also note that the set of all product mixed
states form an $(m^2+n^2-2)$-dimensional subset of
$\sD_{\rL\rU}(\complex^m\ot\complex^n)$.


Before proceeding further, a few words on the notion of the volume
of a smooth manifold is in order (for details, we refer to
\cite[Sec.~3.13]{Duistermaat2000}). Recall that an $n$-dimensional
oriented manifold $\bsM$ with a pseudo-Riemannian metric $g$ has a
standard volume form $\omega$, known as the \emph{Riemannian volume
form,} whose expression in an oriented chart $(x^1,\dotsc,x^n)$ is
given by
\begin{eqnarray*}
\omega=\sqrt{\det(g)}\dif x^1\wedge\cdots\wedge\dif x^n.
\end{eqnarray*}
If $D$ is a domain of integration in $\bsM$, then
\begin{eqnarray*}
\vol_g(D):=\int_D\omega
\end{eqnarray*}
is called the \emph{Riemannian volume} of $D$.


Of special interest is the case where $\bsM$ is the set of all
non-degenerate full-ranked density matrices from
$\density{\complex^d}$. (It is well-known that degenerate density
matrices in $\density{\complex^d}$ form a subset of zero-measure.)
On $\density{\complex^d}$ we have the \emph{Hilbert-Schmidt} inner
product, which is defined by
\begin{eqnarray*}
\Inner{X}{Y}:=\Tr{X^\dagger Y}.
\end{eqnarray*}
Differentiating this inner product yields a metric on $\bsM$ which
we denote by  $g_{\rH\rS}$. We shall denote by $\vol_{\rH\rS}$ the
Riemannian volume form associated with $g_{\rH\rS}$ and refer to the
volume measured by $\vol_{\rH\rS}$ as the \emph{Hilbert-Schmidt (HS)
volume.} Because the Hilbert-Schmidt inner product is invariant
under the adjoint action, the induced metric and the associated
Riemannian volume form are invariant. So the measure on
$\density{\complex^d}$ induced by $\vol_{\rH\rS}$ is a constant
multiple of the normalized Haar measure.


In order to compute the HS volume of all locally diagonalizable
bipartite states, we wish to parametrize such states. Eigenvalues
can serve that purpose, and knowing the density of eigenvalues
essentially solves the question of finding the HS volume. The
following lemma provides that density:
\begin{lem}\label{lem:joint-prob-dist}
The joint probability density of eigenvalues of all locally
diagonalizable bipartite states in
$\sD_{\rL\rU}(\complex^m\ot\complex^n)$ is given by
\begin{eqnarray}
\bP(\Lambda)\propto
 \biggl[\prod_{1\leqslant i<i'\leqslant
m}\sum^n_{j=1}(\lambda_{ij}-\lambda_{i'j})^2\biggr] \biggl[ \prod_{1\leqslant
j<j'\leqslant
n}\sum^m_{i=1}(\lambda_{ij}-\lambda_{ij'})^2\biggr][\dif\Lambda],
\end{eqnarray}
where $[\dif\Lambda]:=\prod^m_{i=1}\prod^n_{j=1}\dif\lambda_{ij}$ is
the Lebesgue volume element for the diagonal matrix
$\Lambda=\sum^m_{i=1}\sum^n_{j=1}\lambda_{i,j}\out{i,j}{i,j}$.
\end{lem}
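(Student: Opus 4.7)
The plan is to derive the density as the marginal of the Hilbert-Schmidt volume form on $\sD_{\rL\rU}$ under the parametrization $\rho=(U_A\ot U_B)\Lambda(U_A\ot U_B)^\dagger$, following the standard strategy used to establish the Weyl integration formula, except that here the change of variables involves a simple tensor of unitaries rather than a single unitary. Set $V=U_A\ot U_B$ and differentiate $\rho=V\Lambda V^\dagger$ to obtain
\begin{eqnarray*}
V^\dagger\,d\rho\,V=[X\ot I+I\ot Y,\Lambda]+d\Lambda,
\end{eqnarray*}
where $X=U_A^\dagger dU_A$ and $Y=U_B^\dagger dU_B$ are skew-Hermitian. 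Invariance of the Hilbert-Schmidt inner product under the conjugation by $V$ means that $\|d\rho\|_{\rH\rS}^2=\|V^\dagger d\rho\,V\|_{\rH\rS}^2$, so the computation can be carried out in the diagonalized frame.

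Next I would examine the matrix elements. A direct calculation gives
\begin{eqnarray*}
(V^\dagger d\rho\,V)_{(i,j),(i',j')}=X_{ii'}\delta_{jj'}(\lambda_{i'j}-\lambda_{ij})+\delta_{ii'}Y_{jj'}(\lambda_{ij'}-\lambda_{ij})+\delta_{ii'}\delta_{jj'}\,d\lambda_{ij}.
\end{eqnarray*}
The key observation is that the three terms have pairwise disjoint supports in the index set $\{(i,j),(i',j')\}$: the $X$-term lives on positions with $j=j'$ and $i\neq i'$, the $Y$-term on positions with $i=i'$ and $j\neq j'$, and the $d\Lambda$-term on strictly diagonal positions. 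Consequently the Hilbert-Schmidt norm splits as
\begin{eqnarray*}
\|d\rho\|_{\rH\rS}^2=\sum_{i,j}d\lambda_{ij}^2+2\sum_{i<i'}|X_{ii'}|^2\sum_{j=1}^n(\lambda_{ij}-\lambda_{i'j})^2+2\sum_{j<j'}|Y_{jj'}|^2\sum_{i=1}^m(\lambda_{ij}-\lambda_{ij'})^2,
\end{eqnarray*}
while the diagonal entries of $X,Y$ drop out automatically (they parametrize the isotropy torus $T_{(m)}\ot T_{(n)}$, in agreement with the quotient description of $\sD_{\rL\rU}(\complex^m\ot\complex^n)$ appearing in the excerpt).

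The induced metric is therefore block-diagonal in the real coordinates $\{\lambda_{ij}\}\cup\{\re X_{ii'},\im X_{ii'}\}_{i<i'}\cup\{\re Y_{jj'},\im Y_{jj'}\}_{j<j'}$. For each off-diagonal index $i<i'$ the $2\times2$ block for $(\re X_{ii'},\im X_{ii'})$ is the scalar $2\sum_j(\lambda_{ij}-\lambda_{i'j})^2$ times $I_2$, contributing $2\sum_j(\lambda_{ij}-\lambda_{i'j})^2$ to $\sqrt{\det g_{\rH\rS}}$; an identical statement holds for each $j<j'$ with the $Y$ variables. Multiplying these contributions gives the Riemannian volume form on $\sD_{\rL\rU}(\complex^m\ot\complex^n)$. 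Finally I would integrate the $X$ and $Y$ directions over the compact flag manifolds $\rU(m)/T_{(m)}$ and $\rU(n)/T_{(n)}$; this produces a finite constant and leaves the announced marginal density on $\Lambda$.

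The routine part is the algebra; the only place that requires care is verifying the disjointness of the three supports and the induced orthogonality of the tangential blocks, since this is what makes the metric diagonal and allows the Jacobian to factor. Once that orthogonality is established, the remaining steps are straightforward linear algebra and produce exactly the claimed product of two ``generalized Vandermonde'' factors $\prod_{i<i'}\sum_j(\lambda_{ij}-\lambda_{i'j})^2$ and $\prod_{j<j'}\sum_i(\lambda_{ij}-\lambda_{ij'})^2$.
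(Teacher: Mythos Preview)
Your argument is correct and follows essentially the same route as the paper: differentiate $\rho=(U_A\ot U_B)\Lambda(U_A\ot U_B)^\dagger$, use unitary invariance of the Hilbert--Schmidt inner product to work in the diagonal frame, observe that the metric is block-diagonal in the $\Lambda$, $X$, and $Y$ directions, read off the Jacobian, and integrate out the flag-manifold factors. Your organization via the disjoint-support observation for the matrix elements is a bit more streamlined than the paper's explicit trace expansion, but the underlying computation and the key orthogonality step are identical.
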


\begin{proof}
Let $\rho\in\sD_{\rL\rU}(\complex^m\ot\complex^n)$. Then  $\rho=(U\ot V)\Lambda(U\ot V)^\dagger$ for some
$U\in\rU(m)/T_{(m)}$ and some $V\in\rU(n)/T_{(n)}$. So
\begin{eqnarray}
\dif \rho = (U\ot V)\Pa{\dif\Lambda + \Br{\dif G,\Lambda}}(U\ot
V)^\dagger.
\end{eqnarray}
Here $\dif G = (U\ot V)^\dagger\dif (U\ot V) = \dif
G_1\ot\I_n+\I_m\ot\dif G_2$, where $\dif G_1= U^\dagger\dif U$ and $
\dif G_2= V^\dagger\dif V$. It suffices to identify the volume
element generated by $\Br{\dif G,\Lambda}$. We have that
\begin{eqnarray*}
&&\Inner{\dif \rho}{\dif \rho} = \Inner{\dif\Lambda + \Br{\dif
G,\Lambda}}{{\dif\Lambda + \Br{\dif G,\Lambda}}} \\
&&= \Inner{\dif\Lambda}{\dif\Lambda}+\Inner{\dif\Lambda}{\Br{\dif
G,\Lambda}}+\Inner{\Br{\dif G,\Lambda}}{\dif\Lambda}+
\Inner{\Br{\dif G,\Lambda}}{\Br{\dif G,\Lambda}}\\
&&= \Tr{\dif\Lambda^2}+\Tr{\Br{\dif G,\Lambda}^\dagger\Br{\dif
G,\Lambda}},
\end{eqnarray*}
where
\begin{eqnarray*}
\Tr{\Br{\dif G,\Lambda}^\dagger\Br{\dif G,\Lambda}}
=2\Tr{\Lambda\dif G\Lambda\dif G} -2\Tr{\Lambda^2\dif G^2}.
\end{eqnarray*}
For $\Lambda=\sum^m_{i=1}\sum^n_{j=1}\lambda_{ij}\out{ij}{ij}$, we
 have
\begin{eqnarray*}
&&\Tr{\Br{\dif G,\Lambda}^\dagger\Br{\dif
G,\Lambda}}=2\sum_{i,j}\sum_{i',j'}
\lambda_{ij}\lambda_{i'j'}\Innerm{ij}{\dif
G}{i'j'}\Innerm{i'j'}{\dif G}{ij} -
2\sum_{i,j}\lambda^2_{ij}\Innerm{ij}{\dif G^2}{ij}\\
&&=-2\sum_{i,j}\sum_{i',j'}
\lambda_{ij}\lambda_{i'j'}\abs{\Innerm{ij}{\dif G}{i'j'}}^2 -
2\sum_{i,j}\lambda^2_{ij}\Innerm{ij}{\dif G^2}{ij}\\
&&=2\sum_{i,j}\sum_{i',j'}\lambda^2_{ij}\abs{\Innerm{ij}{\dif
G}{i'j'}}^2 -2\sum_{i,j}\sum_{i',j'}\lambda_{ij}\lambda_{i'j'}
\abs{\Innerm{ij}{\dif
G}{i'j'}}^2\\
&&=2\sum_{i,j}\sum_{i',j'}\lambda_{ij}(\lambda_{ij}-\lambda_{i'j'})\abs{\Innerm{ij}{\dif
G}{i'j'}}^2.
\end{eqnarray*}
Note that
\begin{eqnarray*}
\abs{\Innerm{ij}{\dif G}{i'j'}}^2 &=& \Pa{\Innerm{i}{\dif
G_1}{i'}\iinner{j}{j'}+\Innerm{j}{\dif
G_2}{j'}\iinner{i}{i'}}\Pa{\overline{\Innerm{i}{\dif
G_1}{i'}\iinner{j}{j'}}+\overline{\Innerm{j}{\dif
G_2}{j'}\iinner{i}{i'}}}\\
&=&\abs{\Innerm{i}{\dif G_1}{i'}}^2\delta_{jj'} +
\abs{\Innerm{j}{\dif G_2}{j'}}^2\delta_{ii'}\\
&& + \Innerm{i}{\dif G_1}{i'}\overline{\Innerm{j}{\dif
G_2}{j'}}\delta_{ii'}\delta_{jj'}+\overline{\Innerm{i}{\dif
G_1}{i'}}\Innerm{j}{\dif G_2}{j'}\delta_{ii'}\delta_{jj'}.
\end{eqnarray*}
Thus,
\begin{eqnarray*}
\Tr{\Br{\dif G,\Lambda}^\dagger\Br{\dif
G,\Lambda}}&=&2\sum_{i,j}\sum_{i',j'}\lambda_{ij}(\lambda_{ij}-\lambda_{i'j'})\abs{\Innerm{i}{\dif G_1}{i'}}^2\delta_{jj'}\\
&&+2\sum_{i,j}\sum_{i',j'}\lambda_{ij}(\lambda_{ij}-\lambda_{i'j'})\abs{\Innerm{j}{\dif G_2}{j'}}^2\delta_{ii'}\\
&&+2\sum_{i,j}\sum_{i',j'}\lambda_{ij}(\lambda_{ij}-\lambda_{i'j'})\Innerm{i}{\dif
G_1}{i'}\overline{\Innerm{j}{\dif
G_2}{j'}}\delta_{ii'}\delta_{jj'}\\
&&+2\sum_{i,j}\sum_{i',j'}\lambda_{ij}(\lambda_{ij}-\lambda_{i'j'})\overline{\Innerm{i}{\dif
G_1}{i'}}\Innerm{j}{\dif G_2}{j'}\delta_{ii'}\delta_{jj'}.
\end{eqnarray*}
That is,
\begin{eqnarray*}
\Tr{\Br{\dif G,\Lambda}^\dagger\Br{\dif
G,\Lambda}}&=&2\sum_{i',i,j}\lambda_{ij}(\lambda_{ij}-\lambda_{i'j})\abs{\Innerm{i}{\dif
G_1}{i'}}^2\\
&&+2\sum_{i,j,j'}\lambda_{ij}(\lambda_{ij}-\lambda_{ij'})\abs{\Innerm{j}{\dif
G_2}{j'}}^2.
\end{eqnarray*}
Since $\abs{\Innerm{i}{\dif G_1}{i'}}^2 = \abs{\Innerm{i'}{\dif
G_1}{i}}^2$ (both vanish if $i=i'$ because $\dif G_1$ is
skew-Hermitian), we have
\begin{eqnarray*}
&&\sum_{i'\neq
i}\Br{\sum_j\lambda_{ij}(\lambda_{ij}-\lambda_{i'j})}\abs{\Innerm{i}{\dif
G_1}{i'}}^2 \\
&&= \sum_{i<
i'}\Br{\sum_j\lambda_{ij}(\lambda_{ij}-\lambda_{i'j})}\abs{\Innerm{i}{\dif
G_1}{i'}}^2 + \sum_{i>
i'}\Br{\sum_j\lambda_{ij}(\lambda_{ij}-\lambda_{i'j})}\abs{\Innerm{i}{\dif
G_1}{i'}}^2\\
&&=\sum_{i< i'}\Br{\sum_j\lambda_{ij}(\lambda_{ij}-\lambda_{i'j})+
\sum_j\lambda_{i'j}(\lambda_{i'j}-\lambda_{ij})}\abs{\Innerm{i}{\dif
G_1}{i'}}^2\\
&&=\sum_{i<
i'}\Br{\sum_j(\lambda_{ij}-\lambda_{i'j})^2}\abs{\Innerm{i}{\dif
G_1}{i'}}^2.
\end{eqnarray*}
Therefore,
\begin{eqnarray*}
&&\Inner{\dif \rho}{\dif \rho}
=\sum_{i,j}\dif\lambda^2_{ij}+2\sum_{i'\neq
i,j}\lambda_{ij}(\lambda_{ij}-\lambda_{i'j})\abs{\Innerm{i}{\dif
G_1}{i'}}^2+2\sum_{i,j\neq
j'}\lambda_{ij}(\lambda_{ij}-\lambda_{ij'})\abs{\Innerm{j}{\dif
G_2}{j'}}^2\\
&&=\sum_{i,j}\dif\lambda^2_{ij}+2\sum_{i<i'}\Pa{\sum_j(\lambda_{ij}-\lambda_{i'j})^2}\abs{\Innerm{i}{\dif
G_1}{i'}}^2+2\sum_{j<j'}\Pa{\sum_i(\lambda_{ij}-\lambda_{ij'})^2}\abs{\Innerm{j}{\dif
G_2}{j'}}^2.
\end{eqnarray*}
This shows that the Hilbert-Schmidt volume element is given by
\begin{eqnarray*}
&&[\dif
\rho]=\prod_{i<i'}2\Br{\sum_j(\lambda_{ij}-\lambda_{i'j})^2}\prod_{j<j'}2\Br{\sum_i(\lambda_{ij}-\lambda_{ij'})^2}[\dif\Lambda][\dif
G_1][\dif G_2]\\
&&=2^{\binom{m}{2}+\binom{n}{2}}\Br{\prod_{i<i'}\sum_j(\lambda_{ij}-\lambda_{i'j})^2}\Br{\prod_{j<j'}\sum_i(\lambda_{ij}-\lambda_{ij'})^2}[\dif\Lambda][\dif
G_1][\dif G_2].
\end{eqnarray*}

The measure induced by the Lebesgue volume element $[\dif G_1]$ on
the flag manifold $\rU(m)/T_{(m)}$ is the quotient measure. Thus,
\begin{eqnarray*}
[\dif G_1] = \frac{\vol_{\rH\rS}(\rU(m))}{\vol_{\rH\rS}(T_{(m)})}
\dif \mu_{\mathrm{Haar}}(U),
\end{eqnarray*}
where $\dif \mu_{\mathrm{Haar}}$ denotes the normalized Haar
measure. Similarly for $[\dif G_2]$. Hence,
\begin{eqnarray}\label{eq:[drho]}
[\dif
\rho]&=&
C_{m,n}
\biggl[\prod_{i<i'}\sum_j(\lambda_{ij}-\lambda_{i'j})^2\biggr]
\biggl[\prod_{j<j'}\sum_i(\lambda_{ij}-\lambda_{ij'})^2\biggr][\dif\Lambda]\dif\mu_{\mathrm{Haar}}(U)\dif\mu_{\mathrm{Haar}}(V),
\end{eqnarray}
where
\begin{eqnarray}\label{eq:cmn}
C_{m,n}=2^{\binom{m}{2}+\binom{n}{2}}
\frac{\vol_{\rH\rS}(\rU(m))}{\vol_{\rH\rS}(T_{(m)})}
\frac{\vol_{\rH\rS}(\rU(n))}{\vol_{\rH\rS}(T_{(n)})},
\end{eqnarray}
Integrating over $U$ and $V$ gives the claimed result.

Finally, thanks to Harish-Chandra's volume formula (see Proposition
\ref{prop:hc-vol}), the constant $C_{m,n}$ is completely determined
by the Lie-algebraic properties of $\rU(m)$ and $\rU(n)$.
\end{proof}

\subsection{Necessary and sufficient conditions for locally diagonalizable two-qubits}

For the most part in this paper, we focus on two-qubits.
But first, recall the notion of the Bloch sphere
representation for a single qubit. In quantum mechanics, the Bloch sphere is a
geometrical representation of the pure state space of a two-level
quantum mechanical system (qubit). Any qubit state can be
represented using the Pauli matrices:
\begin{eqnarray}
\sigma_x =
             \begin{pmatrix}
               0 & 1 \\
               1 & 0 \\
             \end{pmatrix},
           \quad \sigma_y =
             \begin{pmatrix}
               0 & -\mathrm{i} \\
               \mathrm{i} & 0 \\
             \end{pmatrix},
           \quad\sigma_z =
                                 \begin{pmatrix}
                                   1 & 0 \\
                                   0 & -1 \\
                                 \end{pmatrix}.
\end{eqnarray}
More precisely, if we write $\boldsymbol{\sigma}:=(\sigma_x,\sigma_y,\sigma_z)$,
then, for any qubit state $\rho$, we have
\begin{eqnarray}\label{eq:generic-qubit}
\rho = \frac12 (\I + \bsr(\rho)\cdot \boldsymbol{\sigma}),
\end{eqnarray}
for some suitable $\bsr(\rho)\in \mathbb{R}^3$, known as the Bloch
vector of $\rho$, satisfying
$\norm{\bsr(\rho)}:=\sqrt{r^2_x+r^2_y+r^2_z}\leqslant1$. The last
term in \eqref{eq:generic-qubit} is the usual `dot product' of
$3$-tuples.


For reference sake, the commutation and anti-commutation relations
satisfied by the Pauli matrices are:
\begin{eqnarray}
\begin{aligned}\relax
[\sigma_a,\sigma_b] &:=\sigma_a\alpha_b-\sigma_b\sigma_a= 2\epsilon_{abc}\mathrm{i}\sigma_c,\\
\set{\sigma_a,\sigma_b} &:= \sigma_a\sigma_b + \sigma_b\sigma_a
= 2\delta_{ab}\I,
\end{aligned}
\end{eqnarray}
where $\epsilon_{ijk}$ is the Levi-Civita symbol, and $\delta_{ij}$ is the Kronecker delta. The equivariance relation satisfied
by the Bloch vector is as follows:
For $U\in\rS\rU(2)$,
there is some $O\in\rS\rO(3)$ such that
\begin{eqnarray}
\bsr(U\rho U^\dagger) = O\bsr(\rho).
\end{eqnarray}
In other words, the adjoint action of a unitary matrix
$U\in\rS\rU(2)$ on a qubit state $\rho$ amounts to  a rotation of the
corresponding Bloch vector of $\rho$.
We point out that the correspondence
\begin{equation}
\begin{array}{ccc}
\rS\rU(2)&\to&\rS\rO(3)\\
U & \mapsto &O
\end{array}\label{eq:su2dcvr}
\end{equation}
is the universal double covering for $\rS\rO(3)$.


With the Bloch sphere representation \eqref{eq:generic-qubit} for a
qubit, we have that any two-qubit state can be written in the
following way:
\begin{eqnarray}\label{eq:generic}
\rho_{AB} = \frac14\biggl(\I_2\ot\I_2 +
\bsr\cdot\boldsymbol\sigma\ot\I_2 + \I_2\ot\bss\cdot
\boldsymbol{\sigma} + \sum_{i,j=x,y,z}
t_{ij}\sigma_i\ot\sigma_j\biggr).
\end{eqnarray}
Thus, any two-qubit is given by specifying the $3$-dimensional
vectors $\bsr$ and $\bss$ in $\real^3$ and the real $3\times 3$
matrix $T=(t_{ij})$.


If two-qubit states $\rho_{AB}$ and $\rho'_{AB}$ are LU equivalent,
that is, $\rho'_{AB} = (U_A\ot U_B)\rho_{AB}(U_A\ot U_B)^\dagger$
for some $U_A$ and $U_B$ in $\rS\rU(2)$, then one can directly check
that
 there are
$O_A$ and $O_B$ in $\rS\rO(3)$ such that
\begin{eqnarray}
\begin{gathered}
\bsr' = O_A\bsr,\quad \bss' =
O_B\bss,\\
T' = O_ATO^\t_B.
\end{gathered}
\label{eq:2qlueqv}
\end{eqnarray}
Conversely, the existence of such $O_A$ and $O_B$ implies that
$\rho_{AB}$ and $\rho_{AB}'$ are LU equivalent, thanks to the map
\eqref{eq:su2dcvr} being a covering. The following theorem from
\cite{Jing2015} gives another way to check the condition
\eqref{eq:2qlueqv}:

\begin{thrm}\label{th:Jing}
Two generic two-qubit states are LU equivalent if and only if they
have the same values for the following twelve invariants: For $k=0,1,2$,
\begin{eqnarray}
&&\Innerm{\bsr}{\Pa{TT^\t}^k}{\bsr},\quad
\Innerm{\bss}{\Pa{T^\t T}^k}{\bss},\\
&&\Innerm{\bsr}{\Pa{TT^\t}^k T}{\bss},\quad\Tr{\Br{TT^\t}^{k+1}}.
\end{eqnarray}
\end{thrm}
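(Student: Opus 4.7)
The plan is to handle necessity by direct inspection and sufficiency via a singular value analysis of $T$ combined with a Vandermonde extraction argument. For necessity, note that under an LU transformation $T\mapsto O_A T O_B^\t$, we have $TT^\t\mapsto O_A(TT^\t)O_A^\t$ and $T^\t T\mapsto O_B(T^\t T)O_B^\t$, so $\Innerm{\bsr}{(TT^\t)^k}{\bsr}$ is preserved because $O_A^\t O_A=\I_3$, and similarly $\Innerm{\bss}{(T^\t T)^k}{\bss}$ is invariant. For the mixed invariants, $(TT^\t)^k T\mapsto O_A (TT^\t)^k T O_B^\t$, so $\Innerm{\bsr}{(TT^\t)^k T}{\bss}$ is likewise preserved; the trace invariants are invariant by the cyclicity of the trace.

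For sufficiency, start with a real singular value decomposition $T=O_A T_D O_B^\t$ with $O_A,O_B\in\rO(3)$ and $T_D=\mathrm{diag}(d_1,d_2,d_3)$, and similarly $T'=O_A' T_D' (O_B')^\t$. The invariants $\Tr{(TT^\t)^{k+1}}$ for $k=0,1,2$ are precisely the first three power sums of $d_1^2,d_2^2,d_3^2$; by Newton's identities they determine the multiset $\{d_1^2,d_2^2,d_3^2\}=\{(d_1')^2,(d_2')^2,(d_3')^2\}$. After relabeling indices we may assume $d_j^2=(d_j')^2$. Under the generic hypothesis these three values are distinct and nonzero.

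Next, let $\bsu_1,\bsu_2,\bsu_3$ and $\bsv_1,\bsv_2,\bsv_3$ denote the columns of $O_A$ and $O_B$, respectively, and write $\bsr=\sum_j r_j\bsu_j$, $\bss=\sum_j s_j\bsv_j$. Then
\begin{eqnarray*}
\Innerm{\bsr}{(TT^\t)^k}{\bsr}=\sum_j d_j^{2k}r_j^2,\quad \Innerm{\bss}{(T^\t T)^k}{\bss}=\sum_j d_j^{2k}s_j^2,
\end{eqnarray*}
and, using $T\bsv_j=d_j\bsu_j$,
\begin{eqnarray*}
\Innerm{\bsr}{(TT^\t)^k T}{\bss}=\sum_j d_j^{2k+1}r_j s_j.
\end{eqnarray*}
Because $d_1^2,d_2^2,d_3^2$ are distinct, the Vandermonde matrix in $d_j^{2k}$ is invertible, so the three types of invariants respectively determine $r_j^2$, $s_j^2$, and $d_j r_j s_j$ for each $j$, and the analogous quantities for the primed system agree with them. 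This forces $|r_j|=|r_j'|$, $|s_j|=|s_j'|$, and (since $r_j,s_j\neq 0$ generically) the signed products $d_j r_j s_j=d_j' r_j' s_j'$ agree.

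It remains to exploit the sign ambiguity in the SVD -- replacing $(\bsu_j,\bsv_j)$ by $(-\bsu_j,-\bsv_j)$ leaves $T$ unchanged but flips $r_j,s_j$ simultaneously -- to match the two decompositions. For each $j$, flip the pair $(\bsu_j',\bsv_j')$ if necessary so that $r_j'=r_j$; the constraint $d_j r_j s_j=d_j' r_j' s_j'$ together with $d_j^2=(d_j')^2$ then forces $s_j'=\pm s_j$ with the sign correlated to that of $d_j'/d_j$, and by flipping $d_j'$ (i.e.\ swapping the roles of $\bsu_j'$ versus $-\bsu_j'$ in a way that changes the sign of the singular value rather than of $r_j'$) we can additionally enforce $d_j=d_j'$ and $s_j=s_j'$. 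Finally we adjust the orientations so that both $[\bsu_1,\bsu_2,\bsu_3]$ and $[\bsu_1',\bsu_2',\bsu_3']$, and both $[\bsv_j]$ and $[\bsv_j']$, have determinant $+1$; this uses one further simultaneous sign flip $(\bsu_j,\bsv_j)\mapsto(-\bsu_j,-\bsv_j)$ in each pair if needed, which preserves $T,\bsr,\bss$ up to the adjustments already made. Defining $O_A,O_B\in\rS\rO(3)$ by $O_A\bsu_j=\bsu_j'$, $O_B\bsv_j=\bsv_j'$ yields $\bsr'=O_A\bsr$, $\bss'=O_B\bss$, $T'=O_A T O_B^\t$, and hence LU equivalence via the double cover \eqref{eq:su2dcvr}.

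The main obstacle I anticipate is bookkeeping of signs: the simultaneous sign freedom in the SVD and the requirement to land in $\rS\rO(3)$ rather than $\rO(3)$ must be reconciled with the sign information supplied by the mixed invariants, and the arguments rely on the genericity assumption that the $d_j^2$ are distinct and nonzero and that no $r_j$ or $s_j$ vanishes; degenerate strata would require a separate, more careful treatment beyond the scope of a ``generic'' statement.
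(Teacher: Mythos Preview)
The paper does not prove this statement; it is quoted from \cite{Jing2015} without argument, so there is no in-paper proof to compare against.

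On your attempt: necessity is fine, and in the sufficiency direction your Vandermonde extraction correctly recovers $\{d_j^2\}$ together with $r_j^2$, $s_j^2$ and $d_j r_j s_j$ for each $j$. The breakdown occurs exactly where you anticipate it, and it is not a bookkeeping issue but a genuine obstruction. All twelve listed quantities are invariant under the full $\rO(3)\times\rO(3)$ action on $(\bsr,\bss,T)$, since their invariance uses only $O^\t O=\I_3$ and never $\det O=+1$; consequently they cannot separate $\rS\rO(3)\times\rS\rO(3)$ orbits (i.e.\ LU classes) that lie in the same $\rO(3)\times\rO(3)$ orbit. Concretely, take $T=T'=\mathrm{diag}(d_1,d_2,d_3)$ with the $d_j>0$ pairwise distinct, $\bsr=(r_1,r_2,r_3)^\t$, $\bss=(s_1,s_2,s_3)^\t$ with all entries nonzero, and set $\bsr'=(r_1,-r_2,r_3)^\t$, $\bss'=(s_1,-s_2,s_3)^\t$. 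All twelve invariants agree. Yet any orthogonal pair $(O_A,O_B)$ with $O_A T O_B^\t=T$ must commute with $TT^\t$ (resp.\ $T^\t T$), hence be diagonal sign matrices with $O_A=O_B$; matching $\bsr'$ then forces $O_A=\mathrm{diag}(1,-1,1)\notin\rS\rO(3)$. So these two generic states are \emph{not} LU equivalent even though the twelve invariants coincide---indeed this pair is just $\rho$ versus its entrywise complex conjugate $\overline{\rho}$, and complex conjugation is anti-unitary, not a local unitary. Your step ``adjust the orientations so that \ldots\ have determinant $+1$'' therefore cannot be carried out without destroying the equalities $r_j=r_j'$, $s_j=s_j'$, $d_j=d_j'$ already established. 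A complete LU-separating family (e.g.\ Makhlin's eighteen invariants) includes orientation-sensitive quantities such as $\bigl(\bsr\times (TT^\t)\bsr\bigr)\cdot T\bss$, which changes sign in the example above; either such invariants are missing from the statement as transcribed here, or the result in \cite{Jing2015} is formulated differently.
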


Our goal of this section is to find an equivalent condition for
local diagonalizability of two-qubits. Although it is not absolutely
necessary, we will present our argument using the notion of
\emph{$X$-states,} that is, states whose density matrices are of the
form
\begin{eqnarray}\label{eq:X}
\rho_X := \begin{pmatrix}
                \rho_{11} & 0 & 0 & \rho_{14} \\
                0 & \rho_{22} & \rho_{23} & 0 \\
                0 & \rho_{32} & \rho_{33} & 0 \\
                \rho_{41} & 0 & 0 & \rho_{44}
              \end{pmatrix}.
\end{eqnarray}
In particular, $\rho_X$ satisfies the following unit trace and positivity conditions:
\begin{enumerate}[(i)]
\item $\sum^4_{j=1}\rho_{jj}=1$
\item $\rho_{22}\rho_{33}\geqslant\abs{\rho_{23}}^2$ and $\rho_{11}\rho_{44}\geqslant\abs{\rho_{14}}^2$
\end{enumerate}
Diagonal states are special cases of $X$-states. We denote by
$\sD_X(\complex^2\ot\complex^2)$ the set of all two-qubit
$X$-states; it is a $7$-dimensional submanifold of
$\density{\complex^2\ot\complex^2}$, while
$\dim(\density{\complex^2\ot\complex^2})=15$. Thus, owing to
Equation \eqref{eq:localdim}, we have
$\dim(\sD_X(\complex^2\ot\complex^2))=\dim(\sD_{\rL\rU}(\complex^2\ot\complex^2))=7$.
The Hilbert-Schmidt volume of $\sD_X(\complex^2\ot\complex^2)$ has
been calculated\footnote{The Hilbert-Schmidt volume stated here is a
correction to the result of Milz and Strunz \cite{Milz2014}; they
misused a factor leading to an incorrect Hilbert-Schmidt volume of
${\pi^2}/{5040}$. Their calculation of the Euclid volume of
$X$-states is correct.} in \cite{Milz2014}:
$\vol_{\rH\rS}\Pa{\sD_X(\complex^2\ot\complex^2)}={\pi^2}/{630}$.

\begin{prop}\label{prop:x-bloch}
If a two-qubit state is an $X$-state $\rho_X$, then it can be
written as
\begin{eqnarray}
\rho_{X} &=& \frac14(\I\ot\I + a_z\sigma_z\ot\I+
b_z\I\ot\sigma_z+r_{xx}\sigma_x\ot\sigma_x\notag\\
&&~~~~+r_{xy}\sigma_x\ot\sigma_y+r_{yx}\sigma_y\ot\sigma_x+r_{yy}\sigma_y\ot\sigma_y+r_{zz}\sigma_z\ot\sigma_z),
\end{eqnarray}
where
\begin{eqnarray*}
\begin{cases}
a_z &= \rho_{11}-\rho_{22} + \rho_{33}-\rho_{44},\\
b_z &= \rho_{11}+\rho_{22} - \rho_{33}-\rho_{44},\\
r_{zz} &= \rho_{11}-\rho_{22} - \rho_{33}+\rho_{44},
\end{cases}
\quad\text{and}\quad
\begin{cases}
r_{xx} &= \rho_{14}+\rho_{23}+\rho_{32}+\rho_{41},\\
r_{xy} &= \mathrm{i}(\rho_{14}+\rho_{23}-\rho_{32}-\rho_{41}),\\
r_{yx} &= \mathrm{i}(\rho_{14}-\rho_{23}+\rho_{32}-\rho_{41}),\\
r_{yy} &= -\rho_{14}+\rho_{23}+\rho_{32}-\rho_{41}.
\end{cases}
\end{eqnarray*}
Moreover, all eigenvalues of $\rho_X$ are given by
\begin{eqnarray}
\lambda_{1,2}(\rho_X) &=&
\frac14\Pa{(1+r_{zz})\pm\sqrt{(a_z+b_z)^2+(r_{xx}-r_{yy})^2 +
(r_{xy}+r_{yx})^2}},\\
\lambda_{3,4}(\rho_X) &=&
\frac14\Pa{(1-r_{zz})\pm\sqrt{(a_z-b_z)^2+(r_{xx}+r_{yy})^2 +
(r_{xy}-r_{yx})^2}}.
\end{eqnarray}
\end{prop}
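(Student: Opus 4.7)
The plan is to verify both assertions by direct computation in the standard computational basis, exploiting the orthogonality of the Pauli tensor products and the block structure of $\rho_X$.

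First I would start from the general two-qubit Bloch expansion \eqref{eq:generic} and ask which of the sixteen Pauli tensor products $\sigma_i\otimes\sigma_j$ (writing $\sigma_0=\I_2$) have matrix entries supported only on the $X$-pattern $\{(1,1),(2,2),(3,3),(4,4),(1,4),(4,1),(2,3),(3,2)\}$. A quick inspection of the $4\times 4$ Kronecker products shows that $\sigma_x\otimes\I$, $\sigma_y\otimes\I$, $\I\otimes\sigma_x$, $\I\otimes\sigma_y$, $\sigma_x\otimes\sigma_z$, $\sigma_z\otimes\sigma_x$, $\sigma_y\otimes\sigma_z$, $\sigma_z\otimes\sigma_y$ all place mass outside the $X$-pattern, whereas $\sigma_z\otimes\I$, $\I\otimes\sigma_z$, $\sigma_z\otimes\sigma_z$ and $\sigma_a\otimes\sigma_b$ for $a,b\in\{x,y\}$ are $X$-supported. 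Together with $\I\otimes\I$ this produces exactly the eight real parameters needed to span the $7$-dimensional affine space of trace-one Hermitian $X$-matrices, matching $\dim(\sD_X(\complex^2\ot\complex^2))=7$. The displayed Bloch form therefore follows merely by setting the coefficients of the eight non-$X$ terms to zero.

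Next, since $\{\sigma_i\otimes\sigma_j\}$ is an orthogonal basis of $\Her{\complex^2\ot\complex^2}$ with $\Tr{(\sigma_i\otimes\sigma_j)(\sigma_k\otimes\sigma_l)}=4\delta_{ik}\delta_{jl}$, each surviving coefficient is recovered by the formula $c_{ij}=\Tr{\rho_X(\sigma_i\otimes\sigma_j)}$. Plugging in the explicit diagonal matrices $\sigma_z\otimes\I=\diag(1,1,-1,-1)$, $\I\otimes\sigma_z=\diag(1,-1,1,-1)$, $\sigma_z\otimes\sigma_z=\diag(1,-1,-1,1)$ gives the stated expressions for $a_z,b_z,r_{zz}$, while writing out $\sigma_a\otimes\sigma_b$ for $a,b\in\{x,y\}$ and tracing against the four corner entries $\rho_{14},\rho_{23},\rho_{32},\rho_{41}$ delivers the stated expressions for $r_{xx},r_{xy},r_{yx},r_{yy}$.

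For the eigenvalues, the key observation is that $\rho_X$ decomposes, after the basis permutation $(1,2,3,4)\mapsto(1,4,2,3)$, as a direct sum of the two Hermitian $2\times 2$ blocks
\begin{eqnarray*}
B_1=\begin{pmatrix}\rho_{11}&\rho_{14}\\\rho_{41}&\rho_{44}\end{pmatrix},\qquad
B_2=\begin{pmatrix}\rho_{22}&\rho_{23}\\\rho_{32}&\rho_{33}\end{pmatrix},
\end{eqnarray*}
whose eigenvalues are $\tfrac12\bigl((a+b)\pm\sqrt{(a-b)^2+4|c|^2}\bigr)$ for the generic entries $(a,b,c)$. Inverting the relations of the previous step one finds $\rho_{11}+\rho_{44}=\tfrac12(1+r_{zz})$, $\rho_{11}-\rho_{44}=\tfrac12(a_z+b_z)$, $\rho_{22}+\rho_{33}=\tfrac12(1-r_{zz})$, $\rho_{22}-\rho_{33}=\tfrac12(a_z-b_z)$, together with the identities $r_{xx}-r_{yy}=4\re(\rho_{14})$ and $r_{xy}+r_{yx}=-4\im(\rho_{14})$ (and analogously $r_{xx}+r_{yy}=4\re(\rho_{23})$, $r_{xy}-r_{yx}=4\im(\rho_{23})$). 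These yield $16|\rho_{14}|^2=(r_{xx}-r_{yy})^2+(r_{xy}+r_{yx})^2$ and $16|\rho_{23}|^2=(r_{xx}+r_{yy})^2+(r_{xy}-r_{yx})^2$; substitution into the $2\times 2$ eigenvalue formula then produces the two pairs $\lambda_{1,2}(\rho_X)$ and $\lambda_{3,4}(\rho_X)$ in the form stated.

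The argument is entirely mechanical, and the only real obstacle is notational: one must fix a consistent computational-basis ordering so that $\sigma_z\otimes\I$ and $\I\otimes\sigma_z$ are not confused, and be careful with complex-conjugation bookkeeping when repackaging the four corner entries into the real parameters $r_{ab}$ with $a,b\in\{x,y\}$. Once these conventions are pinned down, no further ideas are required.
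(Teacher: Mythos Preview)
The paper states this proposition without proof; it is treated as a routine computational fact, so there is no argument against which to compare. Your proposal is exactly the natural direct verification and is correct in spirit and in substance: identify which Pauli tensors are $X$-supported, read off the coefficients via $\Tr{\rho_X(\sigma_i\otimes\sigma_j)}$, then block-diagonalize $\rho_X$ over $\spn\{\ket{1},\ket{4}\}\oplus\spn\{\ket{2},\ket{3}\}$ and apply the $2\times 2$ eigenvalue formula.

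Two small bookkeeping remarks. First, under the standard Kronecker convention $\sigma_z\otimes\I=\diag(1,1,-1,-1)$ one actually gets $\Tr{\rho_X(\sigma_z\otimes\I)}=\rho_{11}+\rho_{22}-\rho_{33}-\rho_{44}$, which is the paper's $b_z$, not $a_z$; the paper's labelling of $a_z,b_z$ is thus swapped relative to this convention, and your caveat about fixing the basis ordering is well placed. Second, a couple of your intermediate signs slip (for instance $r_{xy}-r_{yx}=-4\im(\rho_{23})$ rather than $+4\im(\rho_{23})$, and $\rho_{22}-\rho_{33}=-\tfrac12(a_z-b_z)$), but since every such quantity enters only through its square, the eigenvalue formulas come out exactly as stated.
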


Applying Proposition \ref{prop:x-bloch} to a $4\times 4$ diagonal
state $\Lambda=\diag(\lambda_1,\lambda_2,\lambda_3,\lambda_4)$, we
have that
\begin{eqnarray}
\Lambda = \frac14(\I_2\ot\I_2 + a_z\sigma_z\ot\I_2+
b_z\I_2\ot\sigma_z+r_{zz}\sigma_z\ot\sigma_z),
\end{eqnarray}
where
\begin{eqnarray}\label{eq:triple}
\begin{cases}
a_z = \lambda_1-\lambda_2 + \lambda_3-\lambda_4\\
b_z = \lambda_1+\lambda_2 - \lambda_3-\lambda_4\\
r_{zz} = \lambda_1-\lambda_2 - \lambda_3+\lambda_4
\end{cases}.
\end{eqnarray}
Then $\bsr(\Lambda)=a_z\ket{3}, \bss(\Lambda)=b_z\ket{3}$, and
$T(\Lambda) = r_{zz}\out{3}{3}$, where $\ket{3}:=(0,0,1)^\t$.
Applying Theorem~\ref{th:Jing} to a state $\rho$ that is LU
equivalent to the diagonal form $\Lambda$ gives us the following
result:

\begin{thrm}\label{th:loc-diag-cond}
A generic two-qubit state $\rho$ is locally diagonalizable if and
only if the following conditions are satisfied:
\begin{eqnarray}
\bsr = a_zO_A\ket{3},\quad\bss=b_zO_B\ket{3},\quad
T=r_{zz}O_A\out{3}{3}O^\t_B,
\end{eqnarray}
where $O_A$ and $O_B$ are in $\rS\rO(3)$, and the triple $(a_z,b_z,r_{zz})$ is
given by \eqref{eq:triple}.
\end{thrm}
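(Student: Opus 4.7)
The plan is to reduce the claim directly to equation \eqref{eq:2qlueqv} and the Bloch-parameter expression for a diagonal two-qubit that was established just above the theorem. By definition, $\rho$ is locally diagonalizable precisely when there exists a diagonal state $\Lambda=\diag(\lambda_1,\lambda_2,\lambda_3,\lambda_4)$ and a simple tensor $U_A\ot U_B\in\rS\rU(2)\ot\rS\rU(2)$ with $\rho=(U_A\ot U_B)\Lambda(U_A\ot U_B)^\dagger$. So the theorem reduces to translating this single equation into its Bloch-data form via \eqref{eq:2qlueqv}.

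For the forward direction, suppose $\rho=(U_A\ot U_B)\Lambda(U_A\ot U_B)^\dagger$. Applying Proposition~\ref{prop:x-bloch} (or rather its specialization to the diagonal case, which is carried out just before the theorem), the Bloch data of $\Lambda$ are exactly $\bsr(\Lambda)=a_z\ket{3}$, $\bss(\Lambda)=b_z\ket{3}$, $T(\Lambda)=r_{zz}\out{3}{3}$, where $(a_z,b_z,r_{zz})$ is computed from the $\lambda_i$ via \eqref{eq:triple}. Let $O_A,O_B\in\rS\rO(3)$ be the images of $U_A,U_B$ under the universal cover \eqref{eq:su2dcvr}. Substituting the Bloch data of $\Lambda$ into \eqref{eq:2qlueqv} yields exactly the three claimed identities for $\bsr,\bss,T$.

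For the converse, assume the three Bloch/correlation identities hold with some $O_A,O_B\in\rS\rO(3)$, where $(a_z,b_z,r_{zz})$ comes from the eigenvalues of $\rho$ via \eqref{eq:triple}. Form the diagonal state $\Lambda=\diag(\lambda_1,\lambda_2,\lambda_3,\lambda_4)$ built from these eigenvalues; its Bloch data are as displayed just before the theorem. Using surjectivity of the universal covering \eqref{eq:su2dcvr}, lift $O_A,O_B$ to $U_A,U_B\in\rS\rU(2)$. By \eqref{eq:2qlueqv}, the state $\rho':=(U_A\ot U_B)\Lambda(U_A\ot U_B)^\dagger$ has the same Bloch vectors and the same correlation matrix as $\rho$. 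Since a two-qubit state is uniquely determined by its data $(\bsr,\bss,T)$ in the decomposition \eqref{eq:generic}, we conclude $\rho=\rho'$, which exhibits $\rho$ as locally diagonalizable.

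There is no real obstacle here; the whole argument is bookkeeping that converts an operator identity into a Bloch/correlation identity via \eqref{eq:2qlueqv}. The only point deserving a word of caution is the well-definedness of the lift in the converse: the preimages of $O_A$ and $O_B$ in $\rS\rU(2)$ are defined only up to sign, but this ambiguity disappears upon conjugation $U\cdot U^\dagger$, so any choice of lifts works.
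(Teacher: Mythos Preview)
Your argument is correct and, in fact, slightly more direct than the paper's stated route. The paper derives Theorem~\ref{th:loc-diag-cond} by ``applying Theorem~\ref{th:Jing}'' (the twelve LU invariants of Jing \emph{et al.}) to the pair $(\rho,\Lambda)$, whereas you bypass that result entirely and work straight from the transformation law \eqref{eq:2qlueqv} together with the surjectivity of the double cover \eqref{eq:su2dcvr}. Since the conditions displayed in Theorem~\ref{th:loc-diag-cond} are literally \eqref{eq:2qlueqv} specialized to $\bsr(\Lambda)=a_z\ket{3}$, $\bss(\Lambda)=b_z\ket{3}$, $T(\Lambda)=r_{zz}\out{3}{3}$, your approach is the natural one; Theorem~\ref{th:Jing} would be needed only if one wanted an invariant-theoretic characterization that avoids the existential quantifier over $O_A,O_B$. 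Your remark that the sign ambiguity in the lift $O\mapsto U$ is absorbed by conjugation is exactly the point that makes the converse go through, and your observation that $(\bsr,\bss,T)$ determine $\rho$ uniquely via \eqref{eq:generic} closes the loop cleanly.
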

The above theorem characterizes locally diagonalizable two-qubit
states. Its generalization to higher dimensions is apparently
unknown at present.

\subsection{The Hilbert-Schmidt volume of locally diagonalizable two-qubits}

The primary goal of this section is to compute the HS volume of
locally diagonalizable two-qubits. The result is the following
theorem:

\begin{thrm}\label{th:vol-lu-for-two-qubit}
The Hilbert-Schmidt volume of
$\sD_{\rL\rU}(\complex^2\ot\complex^2)$ is given by
\begin{eqnarray}
\vol_{\rH\rS}\bigl(\sD_{\rL\rU}(\complex^2\ot\complex^2)\bigr)=
\frac{(4\pi)^2}{105}.
\end{eqnarray}
\end{thrm}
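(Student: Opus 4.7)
The plan is to specialize Lemma~\ref{lem:joint-prob-dist} to the case $m=n=2$, evaluate the resulting polynomial integral over the probability simplex $\Delta_3$, and multiply by the constant $C_{2,2}$ given by \eqref{eq:cmn}.

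Writing $a=\lambda_{11}$, $b=\lambda_{12}$, $c=\lambda_{21}$, $d=\lambda_{22}$, each of the two outer products in Lemma~\ref{lem:joint-prob-dist} collapses to a single summand, giving the density (up to normalization)
\begin{eqnarray*}
f(a,b,c,d)=\bigl[(a-c)^2+(b-d)^2\bigr]\bigl[(a-b)^2+(c-d)^2\bigr].
\end{eqnarray*}
Setting $S=a^2+b^2+c^2+d^2$, $p=ac+bd$, and $q=ab+cd$, one checks that $p+q=(a+d)(b+c)$ and $f=(S-2p)(S-2q)=S^2-2S(p+q)+4pq$. Next I expand each of $S^2$, $S(p+q)$, and $pq$ into a sum of monomials and apply the Dirichlet formula
\begin{eqnarray*}
\int_{\Delta_3}\lambda_1^{\alpha_1}\lambda_2^{\alpha_2}\lambda_3^{\alpha_3}\lambda_4^{\alpha_4}\,[\dif\Lambda]=\frac{\alpha_1!\,\alpha_2!\,\alpha_3!\,\alpha_4!}{(\alpha_1+\alpha_2+\alpha_3+\alpha_4+3)!}
\end{eqnarray*}
term by term. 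The counts I expect are $\int_{\Delta_3}S^2\,[\dif\Lambda]=144/5040$, $\int_{\Delta_3}S(p+q)\,[\dif\Lambda]=64/5040$, and $\int_{\Delta_3}pq\,[\dif\Lambda]=8/5040$, which combine to $\int_{\Delta_3}f\,[\dif\Lambda]=(144-128+32)/5040=48/5040=1/105$.

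To finish, I compute $C_{2,2}$ from \eqref{eq:cmn}. Using $\vol_{\rH\rS}(\rU(2))=8\pi^3$ (e.g., from Macdonald's formula, or from the $2$-to-$1$ covering $\rU(1)\times\rS\rU(2)\to\rU(2)$ together with $\vol_{\rH\rS}(\rS\rU(2))=4\sqrt{2}\pi^2$) and $\vol_{\rH\rS}(T_{(2)})=(2\pi)^2$, we obtain $\vol_{\rH\rS}(\rU(2)/T_{(2)})=2\pi$. Therefore $C_{2,2}=2^{\binom{2}{2}+\binom{2}{2}}(2\pi)^2=(4\pi)^2$, and multiplying through gives
\begin{eqnarray*}
\vol_{\rH\rS}\bigl(\sD_{\rL\rU}(\complex^2\ot\complex^2)\bigr)=C_{2,2}\cdot\int_{\Delta_3}f\,[\dif\Lambda]=\frac{(4\pi)^2}{105}.
\end{eqnarray*}

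The main obstacle is the bookkeeping in expanding $(S-2p)(S-2q)$: it produces on the order of twenty monomial terms, and each requires the correct Dirichlet multiplicity. Exploiting the $D_4$ symmetry of $f$ under the row-swap $(a,b)\leftrightarrow(c,d)$, the column-swap $(a,c)\leftrightarrow(b,d)$, and the transposition $b\leftrightarrow c$ reduces the integrand to only a few orbit types of monomials and cuts down the number of distinct integrals that need to be tracked.
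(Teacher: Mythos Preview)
Your argument is correct. The overall architecture matches the paper exactly: specialize Lemma~\ref{lem:joint-prob-dist} (via \eqref{eq:[drho]}) to $m=n=2$, compute $C_{2,2}=(4\pi)^2$ from \eqref{eq:cmn}, and integrate the density over $\Delta_3$. The only genuine difference is in how the simplex integral is evaluated. The paper splits the product
\[
\bigl[(\lambda_1-\lambda_2)^2+(\lambda_3-\lambda_4)^2\bigr]\bigl[(\lambda_1-\lambda_3)^2+(\lambda_2-\lambda_4)^2\bigr]
\]
into four symmetric cross terms, observes that their integrals coincide, and evaluates one of them by a Laplace transform in the auxiliary variable $t=\sum_j\lambda_j$, obtaining $f(1)=1/420$ and hence $4f(1)=1/105$. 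You instead package the integrand as $(S-2p)(S-2q)$ and apply the Dirichlet moment formula monomial by monomial; your intermediate values $\int S^2=144/5040$, $\int S(p+q)=64/5040$, $\int pq=8/5040$ are all correct and give $1/105$ directly. The paper itself points out in Remark~\ref{rem:dirichlet} that the Dirichlet route is available, so you have essentially carried out the alternative it only sketches. Your method trades the Laplace-transform trick for a short combinatorial count; in this $2\times 2$ case both are painless, and the $D_4$ symmetry you invoke is the same permutation symmetry the paper exploits when declaring the four cross integrals equal.
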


\begin{proof}
If $\rho\in
\sD_{\rL\rU}(\complex^2\ot\complex^2)$, then
\begin{eqnarray}
\rho = (U\ot V)\Lambda (U\ot V)^\dagger,
\end{eqnarray}
where $U$ and $V$ are in $\rU(2)$, and
$\Lambda=\diag(\lambda_1,\lambda_2,\lambda_3,\lambda_4)$ with
$\lambda_j$'s being pairwise different and satisfying
$\sum_j\lambda_j=1$. In the proof of
Lemma~\ref{lem:joint-prob-dist}, we see from \eqref{eq:[drho]} that
\begin{eqnarray*}
[\dif\rho] =
C_{2,2}\Br{(\lambda_1-\lambda_2)^2+(\lambda_3-\lambda_4)^2}
\Br{(\lambda_1-\lambda_3)^2+(\lambda_2-\lambda_4)^2}[\dif\Lambda]\dif
\mu_{\mathrm{Haar}}(U)\dif \mu_{\mathrm{Haar}}(V).
\end{eqnarray*}
Then,
\begin{eqnarray}
\delta(1-\Tr{\rho})[\dif\rho] &=&
C_{2,2}\delta\Pa{1-\sum_j\lambda_j} \notag \\
&&\times\Br{(\lambda_1-\lambda_2)^2+(\lambda_3-\lambda_4)^2}\Br{(\lambda_1-\lambda_3)^2+(\lambda_2-\lambda_4)^2}[\dif\Lambda].
\end{eqnarray}
Thus
\begin{eqnarray*}
&&\int_{\sD_{\rL\rU}(\complex^2\ot\complex^2)}\delta(1-\Tr{\rho})[\dif\rho]\\
&&= C_{2,2}
\int\delta\Pa{1-\sum_j\lambda_j}\Br{(\lambda_1-\lambda_2)^2+(\lambda_3-\lambda_4)^2}
\Br{(\lambda_1-\lambda_3)^2+(\lambda_2-\lambda_4)^2}[\dif\Lambda].
\end{eqnarray*}
It is known that (for details, see \cite{Zhang2015})
\begin{eqnarray}
\vol_{\rH\rS}(\rU(k)) =
\frac{(2\pi)^{\binom{k+1}{2}}}{\prod^k_{j=1}\Gamma(j)}.
\end{eqnarray}
So $\vol_{\rH\rS}(\rU(2))=(2\pi)^3$ and $\vol_{\rH\rS}(T_{(2)}) =
\vol_{\rH\rS}(\rU(1)^2) = (\vol_{\rH\rS}(\rU(1))^2 = (2\pi)^2$.
Therefore,
\begin{eqnarray*}
C_{2,2} = (4\pi)^2
\end{eqnarray*}
and we have
\begin{eqnarray}
&&\vol_{\rH\rS}\Pa{\sD_{\rL\rU}(\complex^2\ot\complex^2)}=\int_{\sD_{\rL\rU}(\complex^2\ot\complex^2)}\delta(1-\Tr{\rho})[\dif\rho]\\
&&= (4\pi)^2\int
\delta\Pa{1-\sum_j\lambda_j}\Br{(\lambda_1-\lambda_2)^2+(\lambda_3-\lambda_4)^2}\Br{(\lambda_1-\lambda_3)^2+(\lambda_2-\lambda_4)^2}\prod^4_{j=1}\dif\lambda_j.
\end{eqnarray}

It remains to evaluate the last integral. Note that
\begin{eqnarray*}
&&\Br{(\lambda_1-\lambda_2)^2+(\lambda_3-\lambda_4)^2}\Br{(\lambda_1-\lambda_3)^2+(\lambda_2-\lambda_4)^2}
\\&&=(\lambda_1-\lambda_2)^2(\lambda_1-\lambda_3)^2
 +(\lambda_3-\lambda_1)^2(\lambda_3-\lambda_4)^2 + (\lambda_2-\lambda_1)^2(\lambda_2-\lambda_4)^2
+(\lambda_4-\lambda_2)^2(\lambda_4-\lambda_3)^2.
\end{eqnarray*}
The following four integrals are equal:
\begin{eqnarray*}
\int
\delta\Pa{1-\sum_j\lambda_j}(\lambda_1-\lambda_2)^2(\lambda_1-\lambda_3)^2\prod^4_{j=1}\dif\lambda_j,\\
\int
\delta\Pa{1-\sum_j\lambda_j}(\lambda_3-\lambda_1)^2(\lambda_3-\lambda_4)^2\prod^4_{j=1}\dif\lambda_j,\\
\int
\delta\Pa{1-\sum_j\lambda_j}(\lambda_2-\lambda_1)^2(\lambda_2-\lambda_4)^2\prod^4_{j=1}\dif\lambda_j,\\
\int
\delta\Pa{1-\sum_j\lambda_j}(\lambda_4-\lambda_2)^2(\lambda_4-\lambda_3)^2\prod^4_{j=1}\dif\lambda_j.\\
\end{eqnarray*}
Let
\begin{eqnarray}\label{eq:f(t)}
f(t) =\int
\delta\Pa{t-\sum_j\lambda_j}(\lambda_1-\lambda_2)^2(\lambda_1-\lambda_3)^2\prod^4_{j=1}\dif\lambda_j.
\end{eqnarray}
Performing the Laplace transformation $(t\to s)$ on $f(t)$, we get,
for $s>0$,
\begin{eqnarray}
\widetilde f(s) &=& \sL(f)(s) =\int^\infty_0 f(t)e^{-st}\dif t\\
&=&\int^\infty_0\int^\infty_0\int^\infty_0\int^\infty_0
\Br{\int^\infty_0\dif
te^{-st}\delta\Pa{t-\sum_j\lambda_j}}(\lambda_1-\lambda_2)^2(\lambda_1-\lambda_3)^2\prod^4_{j=1}\dif\lambda_j.
\\
&=&\int^\infty_0\int^\infty_0\int^\infty_0\int^\infty_0
\exp\Pa{-s\sum_j\lambda_j}(\lambda_1-\lambda_2)^2(\lambda_1-\lambda_3)^2\prod^4_{j=1}\dif\lambda_j.
\end{eqnarray}
By change of variables,
\begin{eqnarray}
\widetilde f(s) &=& s^{-8}
\int^\infty_0\int^\infty_0\int^\infty_0\int^\infty_0
\exp\Pa{-\sum^4_{j=1}x_j}(x_1-x_2)^2(x_1-x_3)^2\prod^4_{j=1}\dif
x_j\\
&=&s^{-8} \int^\infty_0\int^\infty_0\int^\infty_0
\exp\Pa{-\sum^3_{j=1}x_j}(x_1-x_2)^2(x_1-x_3)^2\prod^3_{j=1}\dif
x_j.
\end{eqnarray}
Since
\begin{eqnarray*}
(x_1-x_2)^2(x_1-x_3)^2 = x^4_1 - 2(x_2+x_3)x^3_1 +
(x^2_2+x^2_3+4x_2x_3)x^2_1 - 2x_2x_3(x_2+x_3)x_1 + x^2_2x^2_3,
\end{eqnarray*}
we have
\begin{eqnarray*}
&&\int^\infty_0\dif x_1 e^{-x_1}(x_1-x_2)^2(x_1-x_3)^2 \\
&&= \Gamma(5) - 2(x_2+x_3)\Gamma(4) + (x^2_2+x^2_3+4x_2x_3)\Gamma(3)
- 2x_2x_3(x_2+x_3)\Gamma(2) + x^2_2x^2_3,
\end{eqnarray*}
where we utilized the integral representation of the Gamma function
$\Gamma(z)=\int^\infty_0 x^{z-1}e^{-x}\dif x$. Hence,
\begin{eqnarray*}
&&\int^\infty_0\int^\infty_0\int^\infty_0\dif x_1\dif x_2\dif x_3 e^{-x_1}e^{-x_2}e^{-x_3}(x_1-x_2)^2(x_1-x_3)^2 \\
&&= \Gamma(5) - 2[\Gamma(2)+\Gamma(2)]\Gamma(4) +
[\Gamma(3)+\Gamma(3)+4\Gamma(2)\Gamma(2)]\Gamma(3) \\
&&~~~~- 2[\Gamma(3)\Gamma(2)+\Gamma(2)\Gamma(3)]\Gamma(2) +
\Gamma(3)\Gamma(3)\\
&&=12.
\end{eqnarray*}
Therefore,
\begin{eqnarray}
\widetilde f(s) = 12\cdot s^{-8}.
\end{eqnarray}
Then,
\begin{eqnarray}
f(t) = \sL^{-1}(\widetilde f)(t) = 12\cdot \frac{t^7}{7!} =
\frac1{420}t^7.
\end{eqnarray}
Finally,
\begin{eqnarray}
\vol_{\rH\rS}\bigl(\sD_{\rL\rU}(\complex^2\ot\complex^2)\bigr) =
(4\pi)^2\cdot4f(1) = \frac{(4\pi)^2}{105}.
\end{eqnarray}
This completes the proof.
\end{proof}

\begin{remark}\label{rem:dirichlet}
The evaluation of $f(1)$ from \eqref{eq:f(t)} could have been done
using the formulas for the density of Dirichlet distributions; the
Dirichlet distribution of order $N$ with parameters $\alpha_j>0$
($j=1,\ldots,N$) has a probability density function relative to the
Lebesgue measure on $\real^{N-1}$ given by
\begin{eqnarray*}
p(x_1,\ldots,x_N;
\alpha_1,\ldots,\alpha_N):=C(\alpha_1,\ldots,\alpha_N)
\delta\Pa{1-\sum^N_{k=1}x_k}\prod^N_{k=1}x^{\alpha_k-1}_k,
\end{eqnarray*}
where the normalization constant $C(\alpha_1,\ldots,\alpha_N)$ is given by
\begin{eqnarray*}
C(\alpha_1,\ldots,\alpha_N) =
\frac{\Gamma(\sum^N_{k=1}\alpha_k)}{\prod^N_{k=1}\Gamma(\alpha_k)}.
\end{eqnarray*}
Here $\Gamma$ denotes the Gamma function as usual.
\end{remark}

\begin{remark}
Seeking to generalize the volume formula, we can attempt to apply
the argument used in the proof of
Theorem~\ref{th:vol-lu-for-two-qubit} to arbitrary bipartite states.
That would lead us to:
\begin{eqnarray}
&&\vol_{\rH\rS}\bigl(\sD_{\rL\rU}(\complex^m\ot\complex^n)\bigr) =
\int_{\sD_{\rL\rU}(\complex^m\ot\complex^n)}\delta(1-\Tr{\rho})[\dif\rho]\\
&&= C_{m,n}
\int\delta\biggl(1-\sum^m_{i=1}\sum^n_{j=1}\lambda_{ij}\biggr)
\notag\\
&&~~~~\times \biggl(\prod_{1\leqslant i<i'\leqslant m}
\sum^n_{j=1}(\lambda_{ij}-\lambda_{i'j})^2\biggr)
\biggl(\prod_{1\leqslant j<j'\leqslant n}
\sum^m_{i=1}(\lambda_{ij}-\lambda_{ij'})^2\biggr)
\prod^m_{i=1}\prod^n_{j=1}\dif\lambda_{ij}.\label{eq:expansion-local}
\end{eqnarray}
Here the constant $C_{m,n}$ is from \eqref{eq:cmn}. The above
integral can in principle be evaluated for a given pair $(m,n)$,
giving us a volume formula. But carrying out the computation seems
complicated (we tried to use computers for the qubit-qutrit case
$(m,n)=(2,3)$ without success). We certainly do not have a unified
closed expression for the integral \eqref{eq:expansion-local}. We
can still try to obtain some insight. Let
\begin{eqnarray*}
P(\lambda_{ij}):= \biggl( \prod_{1\leqslant i<i'\leqslant
m}\sum^n_{j=1} (\lambda_{ij}-\lambda_{i'j})^2\biggr)
\biggl(\prod_{1\leqslant j<j'\leqslant n}\sum^m_{i=1}
(\lambda_{ij}-\lambda_{ij'})^2\biggr).
\end{eqnarray*}
This is a homogeneous multivariate polynomial of $mn$ variables
$\lambda_{ij}$ with integer coefficients. Upon expansion, we get
\begin{eqnarray}
P(\lambda_{ij}) = \sum \prod_{i,j}\lambda^{\gamma_{ij}-1}_{ij},
\end{eqnarray}
where $\gamma_{ij}$'s are positive integers, and the summation is
finite. From Remark~\ref{rem:dirichlet}, we can infer that the HS
volume of all locally diagonalizable bipartite states is always a
power of $\pi$ times a rational number.
\end{remark}

\section{Harish-Chandra's volume formula}\label{sect:Harish-Chandra}

Let $\rU(m)$ be the unitary group acting on $\complex^m$ with Lie
algebra $\u(m)$. Denote by $T_{(m)}$ the standard maximal torus of $\rU(m)$,
namely, the set of diagonal matrices in $\rU(m)$.
Note that $T_{(n)}\cong \rU(1)^{\times n}$ and that the Lie algebra $\liet_{(m)}$ of $T_{(m)}$ is isomorphic to
$\sqrt{-1}\real^m$. Without
loss of generality, we take $\liet_{(m)}$ as the set of diagonal
matrices with purely imaginary diagonal entries. Let
$K=\rU(m)\ot\rU(n)$. Then for the Lie algebra $\k$ of $K$
we have
\begin{eqnarray}
\k = \u(m)\ot\I_n+\I_m\ot\u(n).
\end{eqnarray}
Similarly, for the Lie algebra $\liet$ of the maximal torus $T=T_{(m)}\ot T_{(n)}$
of $K$, we have
\begin{eqnarray}
\liet = \liet_{(m)}\ot\I_n+\I_m\ot\liet_{(n)}.
\end{eqnarray}
It is a routine exercise to see that
$\dim(\liet)=\dim(\liet_{(m)})+\dim(\liet_{(n)})-1=m+n-1$.  We shall see that (Proposition~\ref{prop:tsr-rtsp}) $\dim(\k)=\dim(\u(m))+\dim(\u(n))-1=m^2+n^2-1$. We also
have
\begin{eqnarray}
[\rU(m)\ot\rU(n)]/[T_{(m)}\ot T_{(n)}]\simeq
[\rU(m)/T_{(m)}]\ot[\rU(n)/T_{(n)}].
\end{eqnarray}
It goes without saying that, for volumes of quotient spaces, we
shall use quotient measures. For details on quotient measures, we
refer to \cite[Sec.~3.13]{Duistermaat2000}.

\begin{prop}[Harish-Chandra's volume formula \cite{harishchandra1}]
\label{prop:hc-vol} Let $K$ be a compact, connected Lie group. Let
$T$ be the maximal torus of $K$. Endow $K$ with the metric $g$
induced by an invariant inner product on the Lie algebra
$\mathfrak{k}$ of $K$; endow $T$ with the subspace metric. Then the
Riemannian volumes of $K$ and $T$ satisfy
\begin{eqnarray}
\frac{\vol_g(K)}{\vol_g(T)} =
\prod_{\alpha\in\Phi^+_\k}\frac{2\pi}{\Inner{\alpha}{\varpi}},
\label{eq:hcvf}
\end{eqnarray}
where $\Phi^+_\k$ is the set of all positive roots for $\k$ and
$\varpi:=\frac12\sum_{\alpha\in\Phi^+_\k}\alpha$, which is Weyl
vector, i.e., the half the sum of all positive roots of $\k$.
\end{prop}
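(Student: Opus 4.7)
The plan is to reduce to computing $\vol_g(K/T)$---since $\vol_g(K)=\vol_g(K/T)\cdot\vol_g(T)$ by Fubini for quotient measures---and then evaluate this volume by identifying $K/T$ with a coadjoint orbit and applying a stationary-phase/orbit-integral computation.

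First I would exploit the fact that $\varpi$ is a regular element of $\liet^*$ (its stabilizer in $K$ is exactly $T$) to identify $K/T$ with the coadjoint orbit $\cO_\varpi\subset\k^*$ through the map $kT\mapsto \mathrm{Ad}^*(k)\varpi$. On $\cO_\varpi$ one has the Kirillov-Kostant-Souriau symplectic form $\omega$, defined at $\eta\in\cO_\varpi$ by $\omega_\eta(\widetilde X,\widetilde Y)=\langle \eta,[X,Y]\rangle$. At the basepoint $\eta=\varpi$ the tangent space splits along the root decomposition, and a direct linear-algebra computation using the chosen invariant inner product on $\k$ yields the pointwise proportionality
\begin{equation*}
\frac{\omega^{n}}{n!}=\Big(\prod_{\alpha\in\Phi^+_\k}\langle\alpha,\varpi\rangle\Big)\,\vol_g,\qquad n=|\Phi^+_\k|,
\end{equation*}
between the Liouville form on $\cO_\varpi$ and the Riemannian volume form on $K/T$; since both forms are $K$-invariant, checking at one point suffices.

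Next I would invoke Harish-Chandra's orbit-integral formula---equivalently, Duistermaat-Heckman localization applied to the Hamiltonian $\eta\mapsto\langle\eta,X\rangle$ on $\cO_\varpi$, whose $T$-fixed-point set is exactly $\{w\varpi:w\in W\}$---to get
\begin{equation*}
\int_{\cO_\varpi}e^{i\langle\eta,X\rangle}\,\frac{\omega^{n}}{n!}=\sum_{w\in W}\frac{e^{i\langle w\varpi,X\rangle}}{\prod_{\alpha\in\Phi^+_\k}i\langle w\alpha,X\rangle},\qquad X\in\liet\text{ regular.}
\end{equation*}
Using $W$-equivariance of the product $\prod_\alpha\langle\alpha,\cdot\rangle$ together with the Weyl denominator identity $\sum_{w\in W}\mathrm{sgn}(w)\,e^{\langle w\varpi,X\rangle}=\prod_{\alpha\in\Phi^+_\k}\bigl(e^{\langle\alpha,X\rangle/2}-e^{-\langle\alpha,X\rangle/2}\bigr)$, the right-hand side collapses to a single closed-form expression in $X$; taking $X\to 0$ and extracting the leading Taylor coefficient delivers the symplectic volume of $\cO_\varpi$ (up to the expected power of $2\pi$). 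Substituting into the proportionality from Step 1 then yields $\vol_g(K/T)=\prod_{\alpha\in\Phi^+_\k}2\pi/\langle\alpha,\varpi\rangle$, which together with Fubini proves \eqref{eq:hcvf}.

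The main obstacle is not the combinatorial core---the Weyl denominator identity handles that cleanly---but the bookkeeping of normalization constants: powers of $2\pi$, factors of $i$, and sign/orientation conventions arising from the Kirillov-Kostant-Souriau definition of $\omega$ and from comparing the Liouville form $\omega^n/n!$ against the Riemannian volume form determined by the chosen invariant inner product on $\k$. Matching these constants unambiguously at each step, so that the final answer is $2\pi/\langle\alpha,\varpi\rangle$ rather than $2\pi/(c\cdot\langle\alpha,\varpi\rangle)$ for some stray factor $c$, is where care is required.
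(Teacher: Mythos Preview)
The paper does not supply its own proof of this proposition: it is stated as a quoted result with the citation \cite{harishchandra1}, and the subsequent Remark simply points to \cite[Cor.~7.27]{Berline1992} and \cite[Eq.~3.14.13]{Duistermaat2000} for modern expositions. So there is no in-paper argument to compare your proposal against.

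That said, your outline is essentially the argument one finds in the Berline--Getzler--Vergne reference the paper cites: identify $K/T$ with a regular coadjoint orbit, relate the Riemannian volume form to the Liouville form via a root-by-root Jacobian $\prod_{\alpha\in\Phi^+_\k}\langle\alpha,\varpi\rangle$, and then read off the symplectic volume of $\cO_\varpi$ from the Harish-Chandra/Duistermaat--Heckman integral combined with the Weyl denominator identity. Your diagnosis of the main difficulty---tracking the powers of $2\pi$, signs, and orientation conventions so that no stray constant survives---is accurate; in particular the limit $X\to 0$ in the localization formula is a genuine $0/0$ and must be handled by expanding both the Weyl-denominator numerator and the product $\prod_\alpha i\langle\alpha,X\rangle$ to leading order, which is where the factor $\prod_\alpha\langle\alpha,\varpi\rangle$ reappears and cancels against the Jacobian from Step~1. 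One small caveat: you should note that $\varpi$ is automatically regular (it lies in the open fundamental chamber) precisely when $\Phi^+_\k\neq\varnothing$; when $K=T$ the formula is the empty product $1$ and there is nothing to prove.
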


\begin{remark}
If we endow the flag manifold $K/T$ with the quotient measure, then
the volume ratio \eqref{eq:hcvf} equals the volume of $K/T$. The
beauty of Harish-Chandra's formula is that the volume ratio is
completely determined by the Lie-algebraic properties of $\k$ and is
independent of the choice of the invariant metric for $K$ and $T$.
For modern expositions on Harish-Chandra's formula, we refer to
\cite[Cor.~7.27]{Berline1992} or
\cite[Eq.~3.14.13]{Duistermaat2000}.
\end{remark}

\subsection{Hilbert-Schmidt volume}

Our focus is on the Hilbert-Schmidt measure on $K$ and $T$. We shall
denote the volume of $K/T$ relative to the quotient measure as
$\vol_{\rH\rS}(K/T)$. Hence,
\begin{eqnarray*}
\vol_{\rH\rS}(K/T)= \frac{\vol_{\rH\rS}(K)}{\vol_{\rH\rS}(T)}.
\end{eqnarray*}

\begin{exam}\label{ex:unvol}
Let us calculate the volume of the flag
manifold $\rU(n)/T_{(n)}$. The Lie algebra $\liet_{(n)}$
is the set of all diagonal matrices in $\u(n)$.
We can take the following set as the set of all positive roots of
$\u(n)$:
$$
\Phi^+_{\u(n)}=\Set{\alpha_{ij}\in\liet^*_{(n)}:\alpha_{ij}(X)=x_i-x_j
\text{ for any }X=\diag(x_1,\ldots,x_n), \ i<j}.
$$
We can view a diagonal matrix $X$ as a real vector $X=(x_1,\ldots,x_n)$. In turn, we may view $\alpha_{ij}$ as the real vector
$$
\alpha_{ij}=(\cdots \overbrace{1}^{i}\cdots
\overbrace{-1}^{j}\cdots)\quad (1\leqslant i<j\leqslant n),
$$
where $\cdot$ stands for zeroes. Then $\alpha_{ij}(X)=\Inner{\alpha_{ij}}{X}$,
where the right-hand side denotes
the dot product of $\alpha_{ij}$ and $X$. The Weyl vector is
$$
\varpi=\frac12\sum_{i<j}\alpha_{ij}=\frac12(n-1,n-3,\ldots,3-n,1-n).
$$
Let $\set{e_j:j=1,\ldots,n}$ be the standard orthonormal basis for $\real^n$.
Then $\alpha_{ij}=e_i - e_j$ and $\varpi=\sum_j\Pa{\frac{n+1}2-j}
e_j$. So $\Inner{\alpha_{ij}}{\varpi} = \varpi_i-\varpi_j =
\Pa{\frac{n+1}2-i}-\Pa{\frac{n+1}2-j}=j-i$.
 Then, by \eqref{eq:hcvf},
\begin{eqnarray}
\vol_{\rH\rS}(\rU(n)/T_{(n)})
= \frac{\vol_{\rH\rS}(\rU(n))}{\vol_{\rH\rS}(T_{(n)})} =
\prod_{i<j}\frac{2\pi}{\Inner{\alpha_{ij}}{\varpi}}=\prod_{i<j}\frac{2\pi}{j-i}
= \frac{(2\pi)^{\binom{n}{2}}}{\prod^n_{j=1}\Gamma(j)}.
\end{eqnarray}
A direct computation yields (see, for instance, \cite{Zhang2015})
\begin{eqnarray}
\vol_{\rH\rS}(T_{(n)}) = (2\pi)^n.
\end{eqnarray}
Then
\begin{eqnarray}
\vol_{\rH\rS}(\rU(n)) =
\frac{(2\pi)^{\frac{n(n+1)}{2}}}{\prod^n_{j=1}\Gamma(j)}.
\end{eqnarray}
\end{exam}

We now move on to describing the roots of $\k$.
Let $\Phi_{\u(m)}$ denote the set of roots of a Lie algebra $\u(m)$.
For each $\alpha^{(m)}\in \Phi_{\u(m)}$, denote its associated root
space by $\u(m)_{\alpha^{(m)}}$. Let
$\tau_m(\cdot):=\frac1m\Tr{\cdot}$ be the normalized trace form on
the $m\times m$ matrices so that $\tau_m(\I_m)=1$. For any
$\alpha^{(m)}\in\Phi_{\u(m)}$, the tensor product $\alpha^{(m)}\ot
\tau_n$ is a purely imaginary-valued $\real$-linear map
$\liet_{(m)}\ot \I_n\to\complex$. Extend the domain of
$\alpha^{(m)}\ot\tau_n$ by zero  to $\liet=\liet_{(m)}\ot \I_n + \I_m\ot
\liet_{(n)}$ (recall that the intersection $(\liet_{(m)}\ot
\I_n)\cap(\I_m\ot \liet_{(n)})=\sqrt{-1}\real\cdot \I_m\ot\I_n$, and
on such intersection, $\alpha^{(m)}\ot\tau_n$ yields the value $0$
because $\alpha^{(m)}$ depends only on the differences in the
diagonal entries). We denote this extension by $\tilde \alpha^{(m)}=
\alpha^{(m)}\ot\tau_n$. Symmetrically, for each root $\alpha^{(n)}$
of $\u(n)$, we denote by $\tilde\alpha^{(n)}$ the purely
imaginary-valued $\real$-linear map $\liet\to\complex$ obtained by
extending $\tau_m\ot \alpha^{(n)}$ by zero.

\begin{prop}\label{prop:tsr-rtsp}
Let $\widetilde \Phi_{\u(m)}=\Set{\tilde
\alpha^{(m)}\mid\alpha^{(m)}\in\Phi_{\u(m)}}$ and $\widetilde
\Phi_{\u(n)}=\Set{\tilde
\alpha^{(n)}\mid\alpha^{(n)}\in\Phi_{\u(n)}}$. Their disjoint union
yields the set of roots for $\k=\u(m)\ot\I_n+\I_m\ot\u(n)$:
\begin{eqnarray}
\Phi_\k =\widetilde \Phi_{\u(m)}\bigsqcup\widetilde \Phi_{\u(n)}.
\end{eqnarray}
The root space associated with
$\tilde \alpha^{(m)}\in \widetilde \Phi_{\u(m)}$ is $\tilde
\u(m):=\u(m)\ot\I_n$. The root space associated with $\tilde
\alpha^{(n)}\in \widetilde \Phi_{\u(n)}$ is $\tilde
\u(n):=\I_m\ot\u(n)$.
\end{prop}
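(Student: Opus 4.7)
The plan is to complexify and reduce the problem to the standard root-space decompositions of $\gl(m,\complex)$ and $\gl(n,\complex)$. Since roots of a compact Lie algebra are conventionally defined through its complexification, I would work with $\k_\complex=\gl(m,\complex)\ot\I_n+\I_m\ot\gl(n,\complex)$. The two summands intersect exactly in the one-dimensional subspace $\complex\cdot\I_{mn}$, which I can absorb into the first summand; this yields the genuine vector-space direct sum
\begin{eqnarray*}
\k_\complex=\Pa{\gl(m,\complex)\ot\I_n}\oplus\Pa{\I_m\ot\sl(n,\complex)},
\end{eqnarray*}
and both summands are $\ad(\liet)$-invariant.

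Next, I would observe that the Cartan action decouples between the two summands. For $H=H_1\ot\I_n+\I_m\ot H_2\in\liet_\complex$, the terms $\I_m\ot H_2$ and $X_1\ot\I_n$ commute, as do $H_1\ot\I_n$ and $\I_m\ot X_2$, so
\begin{eqnarray*}
\ad_H\Pa{X_1\ot\I_n}=[H_1,X_1]\ot\I_n,\qquad \ad_H\Pa{\I_m\ot X_2}=\I_m\ot[H_2,X_2].
\end{eqnarray*}
Consequently, the weight decomposition of $\k_\complex$ under $\ad(\liet)$ splits as the direct sum of the weight decompositions of the two summands. On $\gl(m,\complex)\ot\I_n$, the nonzero weights are exactly the functionals obtained by pulling back the roots $\alpha^{(m)}$ of $\gl(m,\complex)$ along $H\mapsto H_1$; because $\alpha^{(m)}$ vanishes on scalar diagonals, this pullback agrees with the paper's $\alpha^{(m)}\ot\tau_n$ extended by zero, i.e., with $\tilde\alpha^{(m)}$. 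The associated one-dimensional weight space is $\gl(m,\complex)_{\alpha^{(m)}}\ot\I_n\subset\tilde\u(m)_\complex$, and a symmetric analysis handles $\I_m\ot\sl(n,\complex)$.

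The final touch is to verify that the two families of weights are disjoint, so that no cancellations occur between summands: if $\tilde\alpha^{(m)}=\tilde\alpha^{(n)}$ as functionals on $\liet$, then restricting to $\liet_{(m)}\ot\I_n$ forces $\alpha^{(m)}=0$ and, by symmetry, $\alpha^{(n)}=0$. The main subtlety throughout will be handling the non-uniqueness of the decomposition $H=H_1\ot\I_n+\I_m\ot H_2$ and the overlap $\complex\cdot\I_{mn}$; the fact that $\alpha^{(m)}\ot\tau_n$ is well-defined and extendable by zero relies precisely on $\alpha^{(m)}$ vanishing on scalar matrices, which is automatic because roots of $\gl(m,\complex)$ are differences of diagonal entries. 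Once this well-definedness is in hand, everything else follows routinely from the classical root-space decompositions of $\gl(m,\complex)$ and $\sl(n,\complex)$.
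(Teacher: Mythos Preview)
Your proposal is correct and follows essentially the same route as the paper: complexify, write down the root-space decompositions of the two factors, verify via the bracket computation $[H_1\ot\I_n+\I_m\ot H_2,\,X_1\ot\I_n]=[H_1,X_1]\ot\I_n$ (and symmetrically) that the tensored root spaces are weight spaces for $\liet$, and then check disjointness. The only cosmetic differences are that you make the direct sum explicit by passing to $\I_m\ot\sl(n,\complex)$ and you prove disjointness by a functional argument (restricting $\tilde\alpha^{(m)}=\tilde\alpha^{(n)}$ to $\liet_{(m)}\ot\I_n$), whereas the paper argues that the candidate root spaces $\u(m)_{\alpha^{(m)}}\ot\I_n$ and $\I_m\ot\u(n)_{\alpha^{(n)}}$ intersect trivially because root vectors are strictly off-diagonal.
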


\begin{proof}
Take the root space decompositions
(the subscript $\complex$ denotes  complexification)
\begin{eqnarray}
\u(m)_\complex &=& \liet_{(m),\complex}\oplus
\biggl(\bigoplus^m_{i\neq j}\u(m)_{\alpha^{(m)}_{ij}}\biggr),\\
\u(n)_\complex &=& \liet_{(n),\complex}\oplus
\biggl(\bigoplus^n_{k\neq l}\u(n)_{\alpha^{(n)}_{kl}}\biggr).
\end{eqnarray}
Then,
\begin{eqnarray}
&&\k_\complex = \u(m)_\complex\ot\I_n+\I_m\ot\u(n)_\complex\\
&&=(\liet_{(m),\complex}\ot\I_n+\I_m\ot\liet_{(n),\complex}) +
\biggl(\bigoplus^m_{i\neq j}\u(m)_{\alpha^{(m)}_{ij}}\ot\I_n\biggr)+
\biggl(\bigoplus^n_{k\neq l}\I_m\ot\u(n)_{\alpha^{(n)}_{kl}}\biggr).
\end{eqnarray}
We need to show that $\u(m)_{\alpha^{(m)}_{ij}}\ot\I_n$ and $\I_m\ot\u(n)_{\alpha^{(n)}_{kl}}$ are root spaces for $\k$. Take
an arbitrary vector $Z\in
\liet_{(m),\complex}\ot\I_n+\I_m\ot\liet_{(n),\complex}$; then
$$
Z=X\ot\I_n+\I_m\ot Y
$$
for some $X\in \u(m)_\complex$ and $Y\in\u(n)_\complex$. Observe
that
\begin{eqnarray}
[Z,\u(m)_{\alpha^{(m)}_{ij}}\ot\I_n] =
[X,\u(m)_{\alpha^{(m)}_{ij}}]\ot\I_n
=\alpha^{(m)}_{ij}(X)(\u(m)_{\alpha^{(m)}_{ij}}\ot\I_n).
\end{eqnarray}
So $\u(m)_{\alpha^{(m)}_{ij}}\ot\I_n$ is indeed a root space for
$\k$. Similar argument proves that
$\I_m\ot\u(n)_{\alpha^{(n)}_{kl}}$ is a root space for $\k$.

We have demonstrated so far that $\Phi_\k = \widetilde
\Phi_{\u(m)}\cup\widetilde\Phi_{\u(n)}$. To show that this is a
disjoint union, we claim that, for any pair
$(\alpha^{(m)},\alpha^{(n)})\in \Phi_{\u(m)}\times\Phi_{\u(n)}$, the
two root spaces $\u(m)_{\alpha^{(m)}}\ot\I_n$ and
$\I_m\ot\u(n)_{\alpha^{(n)}}$ are distinct. Since root spaces are
$1$-dimensional, our claim is equivalent to saying that
\begin{eqnarray}
\Pa{\u(m)_{\alpha^{(m)}}\ot\I_n} \cap
\Pa{\I_m\ot\u(n)_{\alpha^{(n)}}} = \set{0}.
\end{eqnarray}
To prove the above equation, say $X\in \u(m)_{\alpha^{(m)}}$ and
$Y\in \u(n)_{\alpha^{(n)}}$. As root vectors, the nonzero entries of
$X$ and $Y$ are all off-diagonal. Now suppose $X\ot\I_n=\I_m\ot Y$.
Because all diagonal entries of both $X$ and $Y$ are zero, then it
is easy to see that both $X$ and $Y$ are zero matrices. This
completes the proof.
\end{proof}
With the above Proposition~\ref{prop:tsr-rtsp} at hand,
we can calculate
 $\vol_{\rH\rS}(K)$ for $K=\rU(m)\ot\rU(n)$
using Harish-Chandra's formula:
\begin{eqnarray}
&&\vol_{\rH\rS}(\rU(m)\ot\rU(n)) = \vol_{\rH\rS}(T_{(m)}\ot
T_{(n)})\prod_{\alpha\in\Phi^+_\k}\frac{2\pi}{\Inner{\alpha}{\varpi}}\\
&&=
(2\pi)^{m+n-1}\prod_{\alpha^{(m)}\in\Phi^+_{\u(m)}}\frac{2\pi}{\Inner{\tilde
\alpha^{(m)}}{\widetilde \varpi_{\u(m)}+\widetilde
\varpi_{\u(n)}}}\prod_{\alpha^{(n)}\in\Phi^+_{\u(n)}}\frac{2\pi}{\Inner{\tilde
\alpha^{(n)}}{\widetilde \varpi_{\u(m)}+\widetilde \varpi_{\u(n)}}}
\end{eqnarray}
where $\Phi^+_\k$ is the set of all positive roots for $\k$.
Note that
\begin{eqnarray}
\vol_{\rH\rS}(T_{(m)}\ot T_{(n)}) = (2\pi)^{m+n-1}
\end{eqnarray}
since $\dim(\liet_{(m)}\ot\I_n+\I_m\ot\liet_{(n)})=m+n-1$.
Following the conventions in Example \ref{ex:unvol}, we have
\begin{eqnarray*}
&&\Inner{\tilde \alpha^{(m)}_{ij}}{\widetilde
\varpi_{\u(m)}+\widetilde \varpi_{\u(n)}} =
\Inner{\alpha^{(m)}_{ij}\ot\tau_n}{\varpi_{\u(m)}\ot\tau_n}+\Inner{\alpha^{(m)}_{ij}\ot\tau_n}{\tau_m\ot\varpi_{\u(n)}}\\
&&=\Inner{\alpha^{(m)}_{ij}}{\varpi_{\u(m)}}\Inner{\tau_n}{\tau_n}+\Inner{\alpha^{(m)}_{ij}}{\tau_m}\Inner{\tau_n}{\varpi_{\u(n)}},
\end{eqnarray*}
that is,
\begin{eqnarray}
\Inner{\tilde \alpha^{(m)}_{ij}}{\widetilde
\varpi_{\u(m)}+\widetilde \varpi_{\u(n)}}=\frac{j-i}n.
\end{eqnarray}
Similarly, we have
\begin{eqnarray}
\Inner{\tilde \alpha^{(n)}_{kl}}{\widetilde
\varpi_{\u(m)}+\widetilde \varpi_{\u(n)}}=\frac{l-k}m.
\end{eqnarray}
Furthermore,
\begin{eqnarray}
\prod_{\alpha^{(m)}\in\Phi^+_{\u(m)}}\frac{2\pi}{\Inner{\tilde
\alpha^{(m)}}{\widetilde \varpi_{\u(m)}+\widetilde \varpi_{\u(n)}}}
&=& \prod_{1\leqslant i<j\leqslant m}\frac{2\pi}{\Inner{\tilde
\alpha^{(m)}_{ij}}{\widetilde \varpi_{\u(m)}+\widetilde
\varpi_{\u(n)}}}\\
&=& \prod_{1\leqslant i<j\leqslant m}\frac{2n\pi}{j-i} =
n^{\binom{m}{2}}\vol_{\rH\rS}(\rU(m)/T_{(m)})
\end{eqnarray}
and
\begin{eqnarray}
\prod_{\alpha^{(n)}\in\Phi^+_{\u(n)}}\frac{2\pi}{\Inner{\tilde
\alpha^{(n)}}{\widetilde \varpi_{\u(m)}+\widetilde \varpi_{\u(n)}}}
&=& \prod_{1\leqslant k<l\leqslant n}\frac{2\pi}{\Inner{\tilde
\alpha^{(n)}_{kl}}{\widetilde \varpi_{\u(m)}+\widetilde
\varpi_{\u(n)}}}\\
&=& \prod_{1\leqslant k<l\leqslant n}\frac{2m\pi}{l-k} =
m^{\binom{n}{2}}\vol_{\rH\rS}(\rU(n)/T_{(n)}).
\end{eqnarray}
Therefore we can draw the following conclusion:
\begin{thrm}\label{th:vol-tensor-group}
For positive integers $m$ and $n$ greater than $1$, we have the following
volume formulas:
\begin{eqnarray}
\vol_{\rH\rS}(\rU(m)\ot\rU(n)/T_{(m)}\ot T_{(n)}) &=&
m^{\binom{n}{2}}n^{\binom{m}{2}}\vol_{\rH\rS}(\rU(m)/T_{(m)})\vol_{\rH\rS}(\rU(n)/T_{(n)})\\
&=&m^{\binom{n}{2}}n^{\binom{m}{2}}\frac{(2\pi)^{\binom{m}{2}+\binom{n}{2}}}{\prod^m_{i=1}\Gamma(i)\prod^n_{j=1}\Gamma(j)}, \\
\vol_{\rH\rS}(T_{(m)}\ot T_{(n)}) &=& (2\pi)^{m+n-1},\\
\vol_{\rH\rS}\Pa{\rU(m)\ot\rU(n)} &=&
m^{\binom{n}{2}}n^{\binom{m}{2}}\frac{(2\pi)^{\binom{m+1}{2}+\binom{n+1}{2}-1}}{\prod^m_{i=1}\Gamma(i)\prod^n_{j=1}\Gamma(j)}.
\end{eqnarray}
In particular, for $(m,n)=(2,2)$, we have
\begin{eqnarray}
\vol_{\rH\rS}\Pa{\rU(2)\ot\rU(2)} = 128\pi^5.
\end{eqnarray}
\end{thrm}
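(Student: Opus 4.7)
The plan is to assemble the three stated formulas from the Harish-Chandra computation already largely carried out in the paragraphs preceding the statement, and then to specialize to $(m,n)=(2,2)$. The key inputs are Proposition~\ref{prop:hc-vol} (Harish-Chandra's formula), Proposition~\ref{prop:tsr-rtsp} (the decomposition of $\Phi^+_\k$), Example~\ref{ex:unvol} (the volume of $\rU(k)/T_{(k)}$), together with the inner-product identities $\Inner{\tilde\alpha^{(m)}_{ij}}{\widetilde\varpi_{\u(m)}+\widetilde\varpi_{\u(n)}}=(j-i)/n$ and $\Inner{\tilde\alpha^{(n)}_{kl}}{\widetilde\varpi_{\u(m)}+\widetilde\varpi_{\u(n)}}=(l-k)/m$ already verified above.

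First I would establish the torus formula $\vol_{\rH\rS}(T_{(m)}\ot T_{(n)})=(2\pi)^{m+n-1}$. This is a purely dimensional observation: the Lie algebra $\liet_{(m)}\ot\I_n+\I_m\ot\liet_{(n)}$ has dimension $m+n-1$, since the two summands intersect along the line $\sqrt{-1}\real\cdot\I_m\ot\I_n$; consequently $T_{(m)}\ot T_{(n)}$ is a product of $m+n-1$ circles of circumference $2\pi$ under the inherited Hilbert-Schmidt metric. Second, I would apply Harish-Chandra's formula to $K=\rU(m)\ot\rU(n)$. By Proposition~\ref{prop:tsr-rtsp} the product over $\Phi^+_\k$ splits as a product over $\Phi^+_{\u(m)}$ times a product over $\Phi^+_{\u(n)}$; pulling the denominators $1/n$ and $1/m$ out of the $\binom{m}{2}$ and $\binom{n}{2}$ factors respectively produces the prefactors $n^{\binom{m}{2}}$ and $m^{\binom{n}{2}}$ multiplying $\vol_{\rH\rS}(\rU(m)/T_{(m)})\vol_{\rH\rS}(\rU(n)/T_{(n)})$. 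Substituting Example~\ref{ex:unvol} then delivers the first stated formula.

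The third formula follows at once from $\vol_{\rH\rS}(\rU(m)\ot\rU(n))=\vol_{\rH\rS}(K/T)\cdot\vol_{\rH\rS}(T)$ and the exponent identity $\binom{m}{2}+\binom{n}{2}+(m+n-1)=\binom{m+1}{2}+\binom{n+1}{2}-1$. For the special case $(m,n)=(2,2)$, direct substitution yields $m^{\binom{n}{2}}n^{\binom{m}{2}}=2\cdot 2=4$, $(2\pi)^{\binom{3}{2}+\binom{3}{2}-1}=(2\pi)^{5}=32\pi^{5}$, and $\prod_{i=1}^{2}\Gamma(i)\prod_{j=1}^{2}\Gamma(j)=1$, whence $\vol_{\rH\rS}(\rU(2)\ot\rU(2))=4\cdot 32\pi^{5}=128\pi^{5}$.

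The only delicate point is correctly tracking the prefactors $n^{\binom{m}{2}}$ and $m^{\binom{n}{2}}$, which trace back to the normalized-trace extensions $\alpha^{(m)}\ot\tau_n$ and $\tau_m\ot\alpha^{(n)}$ used to promote root functionals from the small tori to all of $\liet$. Apart from this bookkeeping, which is already handled in the text above the theorem, the proof is a routine assembly, so the main value of writing it out is organizational rather than conceptual.
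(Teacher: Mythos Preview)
Your proposal is correct and follows essentially the same route as the paper: the paper's proof is precisely the computation carried out in the paragraphs immediately preceding the theorem, namely applying Harish-Chandra's formula to $K=\rU(m)\ot\rU(n)$, splitting the product over $\Phi^+_\k$ via Proposition~\ref{prop:tsr-rtsp}, using the inner-product identities $\Inner{\tilde\alpha^{(m)}_{ij}}{\widetilde\varpi_{\u(m)}+\widetilde\varpi_{\u(n)}}=(j-i)/n$ and its $(n)$-analogue to extract the prefactors $n^{\binom{m}{2}}$ and $m^{\binom{n}{2}}$, invoking $\vol_{\rH\rS}(T_{(m)}\ot T_{(n)})=(2\pi)^{m+n-1}$ from the dimension count, and then multiplying. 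Your write-up simply recapitulates this in the same order, with the same justification for the torus volume and the same final specialization to $(m,n)=(2,2)$.
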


\subsection{Symplectic volume}

The relation between flag manifolds and coadjoint orbits is
well-known. To wit, let $\lambda\in \sqrt{-1}\liet^*\simeq \real^n$
and assume that $\lambda$ is dominant and regular under the
coadjoint action. Let $\cO_{K,\lambda}$ denote the orbit $\lambda$.
Then there is a $K$-equivariantly diffeomorphism
\begin{eqnarray*}
\cO_{K,\lambda} \simeq K/T.
\end{eqnarray*}
Moreover, $\cO_{K,\lambda}$ has a standard symplectic form known as
the Kirillov-Kostant-Souriau form. The following proposition gives a
formula for the symplectic volume of $\cO_{K,\lambda}$ (for a proof,
see \cite[Prop.~7.26]{Berline1992}):

\begin{prop}\label{prop:vol-of-coadjoint-orbit}
Let $K$ be a compact connected Lie group of which $T$ is a maximal
torus. Let $\lambda$ be a dominant vector in $\sqrt{-1}\liet^*$ that
is a regular point under the coadjoint action. Let $\cO_{K,\lambda}$
be the orbit through $\lambda$. The symplectic volume of
$\cO_{K,\lambda}$ relative to the standard symplectic form is
\begin{eqnarray}
\vol_{\mathrm{symp}} (\cO_{K,\lambda}) =
\prod_{\alpha\in\Phi^+_\k}\frac{\Inner{\lambda}{\alpha}}{\Inner{\varpi}{\alpha}}.
\label{eq:coadvol}
\end{eqnarray}
Here $\varpi=\frac12\sum_{\alpha\in\Phi^+_\k}\alpha$.
\end{prop}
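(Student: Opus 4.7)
The plan is to deduce \eqref{eq:coadvol} from the Harish-Chandra formula \eqref{eq:hcvf} by writing the Liouville form $\omega_{\mathrm{KKS}}^n/n!$ as a pointwise constant multiple of the Riemannian volume form on $K/T$, and then integrating.

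First I would transport everything to $K/T$ via the equivariant diffeomorphism $\cO_{K,\lambda}\simeq K/T$. Because both the Kirillov-Kostant-Souriau form and the invariant Riemannian metric are $K$-invariant, it suffices to compare them on the tangent space at the identity coset $eT$. That tangent space is identified with $(\k/\liet)_\real$ and admits the real-root decomposition
$$
(\k/\liet)_\real \;\simeq\; \bigoplus_{\alpha\in\Phi^+_\k}\mathfrak{m}_\alpha,\qquad \mathfrak{m}_\alpha:=(\k_\alpha\oplus\k_{-\alpha})\cap\k,
$$
which is orthogonal for any invariant inner product and has each $\mathfrak{m}_\alpha$ of real dimension two.

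Next, for each $\alpha\in\Phi^+_\k$ I would choose a standard $\su(2)$-triple $(H_\alpha,E_\alpha,F_\alpha)$ and an associated real frame $(X_\alpha,Y_\alpha)\subset\mathfrak{m}_\alpha$ that is orthonormal for the chosen invariant inner product. A direct bracket calculation gives $[X_\alpha,Y_\alpha]$ proportional to $\sqrt{-1}H_\alpha$, so that
$$
\omega_{\mathrm{KKS}}\big|_\lambda(X_\alpha,Y_\alpha)=\Inner{\lambda}{[X_\alpha,Y_\alpha]}=c\,\Inner{\lambda}{\alpha}
$$
for a constant $c$ independent of $\alpha$, fixed once the normalization of the invariant inner product, the pairing $\liet\leftrightarrow\sqrt{-1}\liet^*$, and the convention for $\omega_{\mathrm{KKS}}$ are chosen. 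Because $\omega_{\mathrm{KKS}}$ is block-diagonal with respect to the decomposition above, its top exterior power factors across the $\mathfrak{m}_\alpha$ and yields
$$
\frac{\omega_{\mathrm{KKS}}^n}{n!}\bigg|_{eT}=c^{|\Phi^+_\k|}\biggl(\prod_{\alpha\in\Phi^+_\k}\Inner{\lambda}{\alpha}\biggr)\,d\mu_{\mathrm{Riem}}\bigg|_{eT},
$$
where $n=|\Phi^+_\k|=\tfrac12\dim\cO_{K,\lambda}$.

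Integrating over $K/T$ and invoking \eqref{eq:hcvf} then gives
$$
\vol_{\mathrm{symp}}(\cO_{K,\lambda})=c^{|\Phi^+_\k|}\biggl(\prod_{\alpha\in\Phi^+_\k}\Inner{\lambda}{\alpha}\biggr)\prod_{\alpha\in\Phi^+_\k}\frac{2\pi}{\Inner{\varpi}{\alpha}}.
$$
With the standard normalization of the KKS form that absorbs the factors of $2\pi$ (equivalently, $c=1/(2\pi)$), these cancel and the target formula follows. The main obstacle I expect is the bookkeeping of these normalization constants: the choice of invariant inner product, the identification $\sqrt{-1}\liet^*\simeq\liet$, and the precise convention for $\omega_{\mathrm{KKS}}$ each contribute factors to $c$ that must cancel cleanly. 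An alternative route that sidesteps most of this accounting is the Duistermaat-Heckman exact stationary-phase formula applied to the moment map of $T$ acting on $\cO_{K,\lambda}$: the fixed points are the Weyl translates $\{w\cdot\lambda\}_{w\in W}$ with isotropy weights given by the roots, and the resulting Weyl-antisymmetric sum collapses via the Weyl denominator identity to the claimed product $\prod_{\alpha\in\Phi^+_\k}\Inner{\lambda}{\alpha}/\Inner{\varpi}{\alpha}$.
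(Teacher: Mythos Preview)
The paper does not give its own proof of this proposition; it simply cites \cite[Prop.~7.26]{Berline1992}. So there is no ``paper's proof'' to compare against, only the Berline--Getzler--Vergne argument, which proceeds via the Harish-Chandra integral formula / Duistermaat--Heckman localization---precisely the alternative route you sketch at the end.

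Your primary approach (compare $\omega_{\mathrm{KKS}}^n/n!$ to the Riemannian volume element on $K/T$ root-space by root-space, then invoke \eqref{eq:hcvf}) is sound and standard. One small correction: with the natural normalizations (root vectors $E_\alpha$ chosen so that $B(E_\alpha,E_{-\alpha})=1$, and $X_\alpha,Y_\alpha$ the resulting $B$-orthonormal real basis of $\mathfrak{m}_\alpha$), one finds $[X_\alpha,Y_\alpha]=\pm\sqrt{-1}\,t_\alpha$ where $t_\alpha\in\liet$ is the $B$-dual of $\alpha$, and hence $\omega_{\mathrm{KKS}}(X_\alpha,Y_\alpha)=\pm\Inner{\lambda}{\alpha}$ with $c=\pm1$, not $c=1/(2\pi)$. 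The factors of $2\pi$ in \eqref{eq:hcvf} are instead absorbed by the convention that ``symplectic volume'' in \eqref{eq:coadvol} means $\int_{\cO}(\omega/2\pi)^{n}/n!$ (the normalization compatible with geometric quantization and the Weyl dimension formula), not $\int_{\cO}\omega^n/n!$. You correctly anticipated that the normalization bookkeeping is the crux, and your claim that $c$ is independent of $\alpha$ is true (and not entirely obvious for non-simply-laced $K$); it follows from the fact that once $E_\alpha$ is $B$-normalized, $[E_\alpha,E_{-\alpha}]=t_\alpha$ uniformly.
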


\begin{exam}\label{ex:un-symp-vol}
Consider the unitary group $\rU(n)$. Using the Hilbert-Schmidt inner product,
we can identify $\u(n)^*$ and
$\u(n)$; since the Hilbert-Schmidt inner product is
invariant under the adjoint action, we can also identify
adjoint and coadjoint orbits.
So we may speak of the symplectic volume of
the adjoint orbit $\cO_{\Lambda}$ of
$\Lambda\in\sqrt{-1}\liet$ where $\liet$ is the standard maximal toral subalgebra
of $\u(n)$, that is, the set of diagonal matrices in $\u(n)$.
Note that
$$
\cO_\Lambda=\Ad(\rU(n))\Lambda = \Set{U\Lambda U^*:
\Lambda=\diag(\lambda_1,\ldots,\lambda_n)\text{ with }
\lambda_i\in\real\text{ for all $i=1,\dotsc,n$ and }\lambda_1>\cdots>\lambda_n}.
$$
Continuing with the conventions set up in Example~\ref{ex:unvol}, we have, by Equation \eqref{eq:coadvol},
\begin{eqnarray}\label{eq:adjoint-orbit}
\vol_{\mathrm{symp}}(\cO_\Lambda) =
\prod_{i<j}\frac{\Inner{\alpha_{ij}}{\lambda}}{\Inner{\alpha_{ij}}{\varpi}}
= \prod_{i<j}\frac{\lambda_i-\lambda_j}{j-i} =
\frac{\prod_{i<j}(\lambda_i-\lambda_j)}{1!\cdots(n-1)!}
=\frac{\prod_{i<j}(\lambda_i-\lambda_j)} {\prod^n_{j=1}\Gamma(j)}.
\end{eqnarray}
This result appears in \cite[Eq.~(2.3)]{Christandl2014}.
\end{exam}

\begin{thrm}\label{th:Vol-LUO}
Let $K=\rU(m)\ot \rU(n)$ and $T=T_{(m)}\ot T_{(n)}$
as stated at the beginning of this section.
Let $\Lambda = \diag(\lambda_{1,1},\ldots,\lambda_{m,1},
\ldots,\lambda_{1,n},\ldots,\lambda_{m,n})$ where $\lambda_{i,j}$'s are
real numbers satisfying $\lambda_{1,1}>\ldots>\lambda_{m,n}$.
Then $\Lambda$ is a regular point in the
maximal toral subalgebra $\sqrt{-1}\liet_{(mn)}$ of $\sqrt{-1}\u(mn)$.
Let $\lambda$ be a regular point in $\sqrt{-1}\liet$ derived
from $\Lambda$, i.e.,
$\lambda=\ptr{n}{\Lambda}\ot\I_n+\I_m\ot\ptr{m}{\Lambda}$. Then
the symplectic volume of the adjoint orbit $\cO_{\lambda}$ is given by
\begin{eqnarray}
\vol_{\mathrm{symp}} (\cO_{\lambda}) &=&
\vol_{\mathrm{symp}}(\cO_{\ptr{m}{\Lambda}})
\vol_{\mathrm{symp}}(\cO_{\ptr{n}{\Lambda}}) \label{eq:symp-prd}
 \\
 &=& \frac{\bigl[\prod_{1\leqslant i<j\leqslant m}
 \sum^n_{k=1}(\lambda_{ik} -
\lambda_{jk})\bigr] \bigl[\prod_{1\leqslant i<j\leqslant n}
\sum^m_{i=1} (\lambda_{ik} -
\lambda_{il}) \bigr]}{\prod^m_{i=1}\Gamma(i)\prod^n_{k=1}\Gamma(k)}.
\label{eq:tsr-sym-vol}
\end{eqnarray}
\end{thrm}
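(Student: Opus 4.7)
The plan is to apply Proposition~\ref{prop:vol-of-coadjoint-orbit} to $K=\rU(m)\otimes\rU(n)$ with $T=T_{(m)}\otimes T_{(n)}$, and to exploit the splitting of the positive-root system $\Phi^+_\k=\widetilde\Phi^+_{\u(m)}\sqcup\widetilde\Phi^+_{\u(n)}$ provided by Proposition~\ref{prop:tsr-rtsp}. Since the Weyl vector correspondingly splits as $\varpi=\widetilde\varpi_{\u(m)}+\widetilde\varpi_{\u(n)}$, the product over $\Phi^+_\k$ in \eqref{eq:coadvol} becomes a product over $\widetilde\Phi^+_{\u(m)}$ times a product over $\widetilde\Phi^+_{\u(n)}$, and I expect each piece to coincide with the $\rU(m)$-, resp.\ $\rU(n)$-, symplectic volume of Example~\ref{ex:un-symp-vol} applied to the appropriate partial trace.

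The central calculation is to evaluate $\inner{\lambda}{\tilde\alpha^{(m)}_{ij}}$ and $\inner{\varpi}{\tilde\alpha^{(m)}_{ij}}$ for each root in $\widetilde\Phi^+_{\u(m)}$. For the numerator, since $\tilde\alpha^{(m)}_{ij}$ is the extension by zero of $\alpha^{(m)}_{ij}\otimes\tau_n$, only the summand $\ptr{n}{\Lambda}\otimes\I_n$ of $\lambda$ contributes and the computation collapses to $\alpha^{(m)}_{ij}(\ptr{n}{\Lambda})=\sum_{k=1}^n(\lambda_{ik}-\lambda_{jk})$. For the denominator I will mimic the tensor inner-product expansion already carried out in the proof of Theorem~\ref{th:vol-tensor-group}, obtaining (after a compatible normalization of the invariant inner product) $\inner{\varpi}{\tilde\alpha^{(m)}_{ij}}=j-i$. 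The ratio is then $\sum_{k=1}^n(\lambda_{ik}-\lambda_{jk})/(j-i)$; taking the product over $1\leqslant i<j\leqslant m$ and invoking Example~\ref{ex:un-symp-vol} with $\ptr{n}{\Lambda}$ in place of $\Lambda$ identifies this factor with $\vol_{\mathrm{symp}}(\cO_{\ptr{n}{\Lambda}})$. The symmetric argument for $\widetilde\Phi^+_{\u(n)}$ yields $\vol_{\mathrm{symp}}(\cO_{\ptr{m}{\Lambda}})$. Multiplying the two gives \eqref{eq:symp-prd}, and substituting the closed form from Example~\ref{ex:un-symp-vol} into each factor produces the explicit expression \eqref{eq:tsr-sym-vol}.

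The most delicate point will be the bookkeeping of the invariant inner product on $\k$ (equivalently on $\liet$). The Hilbert-Schmidt inner product inherited from $\u(mn)$ introduces $\tau_n$-weighted scalars such as the $1/n$ appearing in $\inner{\varpi}{\tilde\alpha^{(m)}_{ij}}=(j-i)/n$ used in Theorem~\ref{th:vol-tensor-group}; these are precisely what produced the extraneous multiplier $n^{\binom{m}{2}}m^{\binom{n}{2}}$ in the Riemannian volume formula. The ratio $\inner{\lambda}{\alpha}/\inner{\varpi}{\alpha}$ in \eqref{eq:coadvol} is invariant under a global rescaling of the inner product, so the hard part will be to track numerator and denominator simultaneously and verify that these scalar weights cancel, leaving the clean product structure claimed in the theorem. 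Concretely, the calculation should be performed with the invariant inner product coming from $\u(m)\oplus\u(n)$ (pulled back to $\k$ via the covering $\rU(m)\times\rU(n)\to\rU(m)\otimes\rU(n)$), under which $\inner{\varpi}{\tilde\alpha^{(m)}_{ij}}=j-i$ and $\inner{\lambda}{\tilde\alpha^{(m)}_{ij}}=\sum_k(\lambda_{ik}-\lambda_{jk})$ without spurious prefactors, so that the ratios assemble exactly into the $\u(m)$- and $\u(n)$-symplectic-volume formulas of Example~\ref{ex:un-symp-vol}.
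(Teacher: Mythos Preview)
Your proposal is correct and follows essentially the same route as the paper: apply Proposition~\ref{prop:vol-of-coadjoint-orbit}, split the product over $\Phi^+_\k$ via Proposition~\ref{prop:tsr-rtsp}, evaluate the pairings $\inner{\tilde\alpha^{(m)}_{ij}}{\lambda}$ and $\inner{\tilde\alpha^{(m)}_{ij}}{\varpi}$ (and the $\u(n)$ analogues), and then identify each resulting factor with the formula of Example~\ref{ex:un-symp-vol} applied to the appropriate partial trace. The only cosmetic difference is bookkeeping: the paper works throughout with the Hilbert--Schmidt inner product on $\u(mn)$, obtaining $\inner{\tilde\alpha^{(m)}_{ij}}{\lambda}=\tfrac{1}{n}\sum_k(\lambda_{ik}-\lambda_{jk})$ and $\inner{\tilde\alpha^{(m)}_{ij}}{\varpi}=\tfrac{j-i}{n}$ and letting the $1/n$ (and $1/m$) factors cancel in the ratio, whereas you propose to pass to the pulled-back inner product from $\u(m)\oplus\u(n)$ so that no such factors appear in the first place.
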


\begin{proof}
Thanks to Propositions \ref{prop:tsr-rtsp} and \ref{prop:vol-of-coadjoint-orbit},
we have
\begin{eqnarray}
\vol_{\mathrm{symp}} \Pa{\cO_{\lambda}}
=\prod_{\alpha^{(m)}\in\Phi^+_{\u(m)}}\frac{\Inner{\tilde\alpha^{(m)}}{\lambda}}{\Inner{\tilde\alpha^{(m)}}{\widetilde\varpi_{\u(m)}+\widetilde
\varpi_{\u(n)}}}\prod_{\alpha^{(n)}\in\Phi^+_{\u(n)}}\frac{\Inner{\tilde\alpha^{(n)}}{\lambda}}{\Inner{\tilde\alpha^{(n)}}{\widetilde\varpi_{\u(m)}+\widetilde
\varpi_{\u(n)}}}.
\end{eqnarray}
Continuing with the conventions used in Example \ref{ex:un-symp-vol}, wea hve
\begin{eqnarray}
&&\vol_{\mathrm{symp}} \Pa{\cO_{\lambda}} =\prod_{1\leqslant
i<j\leqslant
m}\frac{\Inner{\tilde\alpha^{(m)}_{ij}}{\lambda}}{\Inner{\tilde\alpha^{(m)}_{ij}}{\widetilde\varpi_{\u(m)}+\widetilde
\varpi_{\u(n)}}}\prod_{1\leqslant i<j\leqslant
n}\frac{\Inner{\tilde\alpha^{(n)}_{kl}}{\lambda}}{\Inner{\tilde\alpha^{(n)}_{kl}}{\widetilde\varpi_{\u(m)}+\widetilde
\varpi_{\u(n)}}}.
\end{eqnarray}
Since
\begin{eqnarray}
\Inner{\tilde\alpha^{(m)}_{ij}}{\lambda} =
\Inner{\alpha^{(m)}_{ij}\ot\tau_n}{\lambda} =
\frac1n\Tr{((E_{ii}-E_{jj})\ot\I_n)\Lambda} =
\frac1n\Tr{(E_{ii}-E_{jj})\Ptr{n}{\Lambda}}
\end{eqnarray}
and
\begin{eqnarray}
\Inner{\tilde\alpha^{(n)}_{kl}}{\lambda} =
\Inner{\tau_m\ot\alpha^{(n)}_{kl}}{\lambda} =
\frac1m\Tr{(\I_m\ot(E_{kk}-E_{ll}))\Lambda}=
\frac1m\Tr{(E_{kk}-E_{ll})\Ptr{m}{\Lambda}}
\end{eqnarray}
it follows that
\begin{eqnarray}
\Inner{\tilde\alpha^{(m)}_{ij}}{\lambda} =\frac1n
\sum^n_{k=1}\Pa{\lambda_{ik} - \lambda_{jk}}
\end{eqnarray}
and
\begin{eqnarray}
\Inner{\tilde\alpha^{(n)}_{kl}}{\lambda}
=\frac1m\sum^m_{i=1}\Pa{\lambda_{ik} - \lambda_{il}}.
\end{eqnarray}
Therefore,
\begin{eqnarray}
\vol_{\mathrm{symp}} \Pa{\cO_{\lambda}} &=&\prod_{1\leqslant
i<j\leqslant m}\frac{\frac1n \sum^n_{k=1}\Pa{\lambda_{ik} -
\lambda_{jk}}}{\frac1n(j-i)}\prod_{1\leqslant i<j\leqslant
n}\frac{\frac1m\sum^m_{i=1}\Pa{\lambda_{ik} -
\lambda_{il}}}{\frac1m(l-k)}\\
&=& \prod_{1\leqslant i<j\leqslant m}\frac{
\sum^n_{k=1}\Pa{\lambda_{ik} - \lambda_{jk}}}{j-i}\prod_{1\leqslant
i<j\leqslant n}\frac{\sum^m_{i=1}\Pa{\lambda_{ik} -
\lambda_{il}}}{l-k}.
\end{eqnarray}
Together with \eqref{eq:adjoint-orbit}, we have the desired
equalities \eqref{eq:tsr-sym-vol} and
 \eqref{eq:symp-prd}.
\end{proof}

\begin{cor}\label{cor:coadjoint-LU(4)}
Let $K=\rU(2)\ot\rU(2)$ and $T=T_{(2)}\ot T_{(2)}$.
Let
$\Lambda=\diag(\lambda_1,\lambda_2,\lambda_3,\lambda_4)\in \sqrt{-1}\mathfrak{t}$
with $\lambda_1>\lambda_2>\lambda_3>\lambda_4\geqslant0$ and
$\sum^4_{j=1}\lambda_j=1$.
Let $\cO_{\lambda}^{\mathrm{LU}}$ denote the adjoint orbit
of $\Lambda$, that is,
$$
\cO^{\mathrm{LU}}_\Lambda:=\Set{(U\ot V)\Lambda (U^{-1}\ot V^{-1}):
U,V\in\rU(2)}.
$$
Then the symplectic volume of such local unitary orbit is given by
\begin{eqnarray}
\vol_{\mathrm{symp}} \Pa{\cO^{\mathrm{LU}}_\lambda} =
(\lambda_1+\lambda_2-\lambda_3-\lambda_4)(\lambda_1+\lambda_3-\lambda_2-\lambda_4).
\end{eqnarray}
\end{cor}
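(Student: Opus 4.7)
The plan is to apply Theorem \ref{th:Vol-LUO} directly with $m=n=2$ and simplify. First I would reconcile the two indexing schemes: the $4\times 4$ diagonal matrix $\Lambda=\diag(\lambda_1,\lambda_2,\lambda_3,\lambda_4)$ in the corollary corresponds to the matrix $\diag(\lambda_{1,1},\lambda_{2,1},\lambda_{1,2},\lambda_{2,2})$ in the theorem, so we set $\lambda_{1,1}=\lambda_1$, $\lambda_{2,1}=\lambda_2$, $\lambda_{1,2}=\lambda_3$, $\lambda_{2,2}=\lambda_4$. Since $\lambda_1>\lambda_2>\lambda_3>\lambda_4\geqslant 0$, the ordering assumption of Theorem \ref{th:Vol-LUO} is satisfied and $\Lambda$ is a regular point in $\sqrt{-1}\liet_{(4)}$.

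Next I would evaluate the numerator and denominator of \eqref{eq:tsr-sym-vol}. With $m=n=2$, the products over $1\leqslant i<j\leqslant m$ and $1\leqslant k<l\leqslant n$ each contain a single factor corresponding to $(i,j)=(1,2)$ and $(k,l)=(1,2)$, respectively. The first factor is
\begin{eqnarray*}
\sum^2_{k=1}(\lambda_{1,k}-\lambda_{2,k})
= (\lambda_{1,1}-\lambda_{2,1})+(\lambda_{1,2}-\lambda_{2,2})
= \lambda_1+\lambda_3-\lambda_2-\lambda_4,
\end{eqnarray*}
and the second is
\begin{eqnarray*}
\sum^2_{i=1}(\lambda_{i,1}-\lambda_{i,2})
= (\lambda_{1,1}-\lambda_{1,2})+(\lambda_{2,1}-\lambda_{2,2})
= \lambda_1+\lambda_2-\lambda_3-\lambda_4.
\end{eqnarray*}
The denominator is $\prod_{i=1}^{2}\Gamma(i)\prod_{k=1}^{2}\Gamma(k)=(\Gamma(1)\Gamma(2))^{2}=1$, so substitution into \eqref{eq:tsr-sym-vol} gives the claimed formula.

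As a consistency check I would verify via the product factorization \eqref{eq:symp-prd}: the partial traces $\ptr{2}{\Lambda}=\diag(\lambda_1+\lambda_3,\lambda_2+\lambda_4)$ and $\ptr{1}{\Lambda}=\diag(\lambda_1+\lambda_2,\lambda_3+\lambda_4)$ are regular elements of the standard maximal toral subalgebras of $\u(2)$ under the given ordering, and by Example \ref{ex:un-symp-vol} their coadjoint orbits have symplectic volumes equal to the differences of their diagonal entries; taking the product recovers the same expression. Since the derivation is a direct substitution into a result already proved, there is no substantial obstacle — the only point of care is the bookkeeping of the single versus double indexing conventions, which I would make explicit at the outset to avoid sign ambiguities.
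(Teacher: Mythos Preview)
Your proposal is correct and follows exactly the approach of the paper, which simply records the result as ``a direct consequence of Theorem~\ref{th:Vol-LUO}.'' You have merely spelled out the substitution and index bookkeeping in more detail than the paper does, and your optional consistency check via \eqref{eq:symp-prd} is a nice addition.
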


\begin{proof}
This is a direct consequence of Theorem~\ref{th:Vol-LUO}.
\end{proof}

\begin{remark}
Let $\Lambda$ be as in Corollary~\ref{cor:coadjoint-LU(4)},
and let
\begin{eqnarray}
\cO^{\rG\rU}_\Lambda:=\Set{W\Lambda W^{-1}: W\in\rU(4)}.
\end{eqnarray}
By \eqref{eq:adjoint-orbit}, we have
\begin{eqnarray}
\vol_{\mathrm{symp}} (\cO^{\rG\rU}_\Lambda)=
\frac1{12}\prod_{1\leqslant i<j\leqslant 4}(\lambda_i-\lambda_j).
\end{eqnarray}
Meanwhile, under the constraint
$\lambda_1>\lambda_2>\lambda_3>\lambda_4\geqslant0$ and
$\sum^4_{j=1}\lambda_j=1$, we have
\begin{eqnarray*}
&&\vol_{\mathrm{symp}} (\cO^{\mathrm{LU}}_\lambda) =
[(\lambda_1-\lambda_3)+(\lambda_2-\lambda_4)][(\lambda_1-\lambda_2)+(\lambda_3-\lambda_4)]\\
&&>
4\sqrt{(\lambda_1-\lambda_3)(\lambda_2-\lambda_4)(\lambda_1-\lambda_2)(\lambda_3-\lambda_4)}\\
&&>
4(\lambda_1-\lambda_3)(\lambda_2-\lambda_4)(\lambda_1-\lambda_2)(\lambda_3-\lambda_4)\\
&&> 4\prod_{1\leqslant i<j\leqslant 4} (\lambda_i-\lambda_j)>48
\vol_{\mathrm{symp}} (\cO^{\rG\rU}_\Lambda)
> \vol_{\mathrm{symp}}  (\cO^{\rG\rU}_\Lambda).
\end{eqnarray*}
So we see that $\vol_{\mathrm{symp}} (\cO^{\rG\rU}_\Lambda) <
\vol_{\mathrm{symp}}(\cO^{\mathrm{LU}}_\lambda)$. At the same time,
$\cO^{\rL\rU}_\Lambda$ is a submanifold of $\cO^{\rG\rU}_\Lambda$
since $\rU(2)\ot\rU(2)$ is a Lie subgroup of $\rU(4)$; but this is
not a contradiction because the measures for
$\cO^{\mathrm{LU}}_\Lambda$ and $\cO^{\rG\rU}_\Lambda$ have no
\emph{a priori} relation, so one cannot directly compare the two. In
fact, $\cO^{\rL\rU}_\Lambda$ is a set of zero-measure in
$\cO^{\rG\rU}_\Lambda$ because
$\dim(\cO^{\rL\rU}_\Lambda)<\dim(\cO^{\rG\rU}_\Lambda)$.
\end{remark}

\section{Discussion and concluding remarks}\label{sect:discuz-&-conclusion}

There may exist many $\Lambda$'s corresponding to
a single $\lambda$ in Theorem~\ref{th:Vol-LUO}.
A relevant well-known problem is the so-called quantum marginal problem.
For the two-qubit system, there is a nice solution for it
\cite{Bravyi2004}. Specifically, mixed two-qubit state $\rho_{AB}$
with spectrum $\lambda_1\geqslant \lambda_2\geqslant
\lambda_3\geqslant \lambda_4\geqslant0$ and margins $\rho_A$ and $\rho_B$
exists if and only if minimal eigenvalues $\lambda_A$ and $\lambda_B$ of
the margins satisfy the following inequalities:
\begin{eqnarray}
\begin{cases}
\min(\lambda_A,\lambda_B)\geqslant \lambda_3+\lambda_4,\\
\lambda_A+\lambda_B\geqslant \lambda_2+2\lambda_3+\lambda_4,\\
\abs{\lambda_A-\lambda_B}\leqslant
\min(\lambda_1-\lambda_3,\lambda_2-\lambda_4).
\end{cases}
\end{eqnarray}
Here we examine a specific example showing this property. Let
$\rho_{AB}$ be any two-qudit in $\density{\complex^d\ot\complex^d}$.
Then there exists a global unitary $V\in\rU(d^2)$ such that
$\rho'_{AB} = V\rho_{AB} V^\dagger$ with two marginal states as
$\rho'_A = \rho'_B = \I_d/d$. Indeed, by the Spectral Decomposition
Theorem, we have the following decomposition: Writing
$[k]:=\set{1,\ldots,k}$ for any positive integer $k$, we have
\begin{eqnarray}
\rho_{AB} = \sum^{d^2}_{j=1}\lambda_j \out{\Psi_j}{\Psi_j},\quad
\ket{\Psi_j}\in \complex^d\ot\complex^d,
\end{eqnarray}
where $\lambda_j\geqslant0$ for each $j\in[d^2]$, and
$\Set{\ket{\Psi_j}:j\in[d^2]}$ are the eigenvectors corresponding to
eigenvalues $\lambda_j$. There exists a collection of unitary
matrices, called \emph{discrete Weyl unitary matrices},
$W_j\in\rU(d)$, $j\in[d^2]$, that form a unitary matrix basis for
$M_d(\complex)$, the set of all $d\times d$ complex matrices.
If we denote by $\vec(M)$ is the
vectorization of a complex rectangular matrix $M$, that is,
 $\vec(M) := \sum_{i,j}M_{ij}\ket{ij}$ where
$M=\sum_{i,j}M_{ij}\out{i}{j}$, then
$\Set{\vec(W_j):j\in[d^2]}$ forms a maximally entangled basis for
$\complex^d\ot\complex^d$ (see also for its generalization in
\cite{Guo2015}).
So there is a global unitary matrix
$V\in \rU(d^2)$ such that
\begin{eqnarray}
V\ket{\Psi_j}=\frac1{\sqrt{d}}\vec(W_j), j \in[d^2],
\end{eqnarray}
since $\Set{\ket{\Psi_j}:j\in[d^2]}$ and
$\Set{\frac1{\sqrt{d}}\vec(W_j):j\in[d^2]}$ are two orthonormal
bases for the same space $\complex^d\ot\complex^d$. This implies that
\begin{eqnarray}
\rho'_{AB}=V\rho_{AB}V^\dagger = \sum^{d^2}_{j=1}\lambda_j
V\out{\Psi_j}{\Psi_j}V^\dagger =
\frac1d\sum^{d^2}_{j=1}\lambda_j\vec(W_j)\vec(W_j)^\dagger.
\end{eqnarray}
Hence,
\begin{eqnarray}
\rho'_A = \frac1d\sum^{d^2}_{j=1}\lambda_jW_jW_j^\dagger =
\frac{\I_d}d\sum^{d^2}_{j=1}\lambda_j =
\frac{\I_d}d\Tr{\rho_{AB}}=\frac{\I_d}d
\end{eqnarray}
and
\begin{eqnarray}
\rho'_B = \frac1d\sum^{d^2}_{j=1}\lambda_j(W^\dagger_jW_j)^\t =
\frac{\I_d}d\sum^{d^2}_{j=1}\lambda_j =
\frac{\I_d}d\Tr{\rho_{AB}}=\frac{\I_d}d.
\end{eqnarray}
Therefore, $\rho'_A = \rho'_B = \I_d/d$. This example also indicates
that the maximum of mutual information along a global unitary orbit
of a given bipartite state with the prescribed spectrum $\Lambda$ is
$2\ln(d)-S(\Lambda)$ \cite{Jevtic2012a,Jevtic2012b}, where
$S(\Lambda)$ is the von~Neumann entropy. Now fix $\rho_A$ and
$\rho_B$, and denote by $\cC(\rho_A,\rho_B)$ the set of all
bipartite states $\rho_{AB}$ with fixed marginal states $\rho_A$ and
$\rho_B$, respectively. It is known that $\cC(\rho_A,\rho_B)$ is a
compact convex set. Moreover, in Parthasarathy \cite{Patha2005}, a
necessary and sufficient condition is presented for an element
$\rho_{AB}$ in $\cC(\rho_A,\rho_B)$ to be an extreme point; in the
two-qubit case, the condition amounts to a two-qubit state
$\rho_{AB}\in \cC(\I/2,\I/2)$ being maximally entangled.

There are several \emph{open} questions which are presented below:
\begin{question}
Suppose that bipartite states $\rho$ and $\tilde\rho$ are not LU
equivalent, while $\rho$ is locally diagonalizable but $\tilde\rho$
is not. As we have seen previously, the HS volume of
$\cO^{\rL\rU}_\rho$ can be calculated. The question is how to
calculate the HS volume of $\cO^{\rL\rU}_{\tilde\rho}$?
\end{question}

\begin{question}
It is easily seen that $\sD_{\rL\rU}(\complex^m\ot\complex^n)$ can
be partitioned into local unitary orbits, that is,
\begin{eqnarray}
\sD_{\rL\rU}(\complex^m\ot\complex^n) =
\bigsqcup_{\Lambda\in\Delta_{mn-1}}\cO^{\rL\rU}_\Lambda.
\end{eqnarray}
How do we get the HS volume
$\vol_{\rH\rS}\Pa{\sD_{\rL\rU}(\complex^m\ot\complex^n)}$ from the
HS volumes $\vol_{\rH\rS}(\cO^{\rL\rU}_\Lambda)$ of local unitary
orbits $\cO^{\rL\rU}_\Lambda$?
\end{question}

\begin{question}
Can we establish some kind of a "canonical form" for bipartite
states like the Singular Value Decomposition (SVD) for
complex matrices and/or the Spectral Decomposition for normal
matrices, especially for any two-qubit states? Also, how to obtain that?
\end{question}

\begin{question}
Because $\cC(\rho_A,\rho_B)$ is a compact convex set, its HS volume
can in principle be calculated. To the best of our knowledge, an
analytical expression for its HS volume has not been founded. We
note that the volume for $\cC(\I/2,\I/2)$ relative to the Lebesgue
measure can be found in \cite{Lovas2016a}: $\vol_g(\cC(\I/2,\I/2)) =
{2\pi^4}/{315}$. Note also that we can identify $\cC(\I/2,\I/2)$
with the set of all unital qubit quantum channels via
Choi-Jami{\l}kowski isomorphism. The problem of calculating the HS
volume of $\cC(\I/2,\I/2)$ and some relevant discussions can also be
found in \cite{Zhang2015}.
\end{question}

In summary, we analyzed in this paper the geometry of locally
diagonalizable bipartite states its Hilbert-Schmidt volume and
symplectic volume. We obtained an expression for the HS volume
involving an integral, so in principle, we could work out an
analytical formula; but for now, we evaluated the volume in the
two-qubit case. In addition, we obtained a necessary and sufficient
condition for a two-qubit state to be LU equivalent to a diagonal
state. A generalization to higher dimensional cases is still open.
After introducing Harish-Chandra's volume formula for flag
manifolds, we turned to the geometry of local unitary orbits. We
found that Harish-Chandra's volume formula can be applied to
calculate the volume of local unitary orbits. We also obtained a
volume formula for the tensor product $\rU(m)\ot\rU(n)$. Although
this is a direct consequence of Harish-Chandra's volume formula,
there is, to our knowledge, no record of it in the literature. We
believe these results and the questions raised can shed new lights
and spur relevant research in quantum information theory.

\subsubsection*{Acknowledgements}
This research was supported by Zhejiang Provincial Natural Science
Foundation of China under Grant No. LY17A010027 and NSFC
(Nos.11301124, 61771174), and also supported by the
crossdisciplinary innovation team building project of Hangzhou
Dianzi University. SH was partially supported under Northwestern
Scholarship Grant.


\end{document}